\definecolor{blueLink}{rgb}{0,0.2,0.8}
\def\eqref#1{equation~\ref{#1}}
\def\1{\bm{1}}
\DeclareMathAlphabet{\mathsfit}{\encodingdefault}{\sfdefault}{m}{sl}
\SetMathAlphabet{\mathsfit}{bold}{\encodingdefault}{\sfdefault}{bx}{n}
\definecolor{arashcolor}{rgb}{0.0735, 0.49, 0.98}
\newtheorem{protocol}{Algorithm}
\newcommand{\sln}{\textsc{SLN}\xspace}
\newcommand{\csln}{\textsc{C-SLN}\xspace}
\newcommand{\uniform}{\textsc{Uniform}\xspace}
\newcommand{\laslin}{\textsc{LASLiN}\xspace}
\newcommand{\ossln}{\textsc{OSSLN}\xspace}
\newtheorem{theorem}{Theorem}
\newtheorem{definition}{Definition}
\newtheorem{lemma}{Lemma}
\newcommand{\olga}[1]{
  {\color{orange}(Olga: #1) \color{black}}
}
\title{
  A Learning-Augmented Overlay Network
}
\author{Julien Dallot, Caio Caldeira, Arash Pourdamghani, Olga Goussevskaia, Stefan Schmid \thanks{This paper has been supported by Austrian Science Fund (FWF), project I 5025-N (DELTA),
2020-2024.} \\
Technical University of Berlin, Universidade Federal de Minas Gerais
}
\begin{document}

\maketitle

\begin{abstract}
  This paper studies the integration of machine-learned advice in overlay networks in order to adapt their topology to the incoming demand.
  Such demand-aware systems have recently received much attention, for example in the context of data structures (Fu et al. in ICLR 2025, Zeynali et al. in ICML 2024).
  We in this paper extend this vision to overlay networks where requests are not to individual keys in a data structure but occur between communication pairs, and where algorithms have to be distributed. 
  In this setting, we present an algorithm that adapts the topology (and the routing paths) of the overlay network to minimize the hop distance travelled by bit, that is, distance times demand.
  In a distributed manner, each node receives an (untrusted) prediction of the future demand to help him choose its set of neighbors and its forwarding table.

  This paper focuses on optimizing the well-known \emph{skip list networks} (SLNs) for their simplicity.
  We start by introducing \emph{continuous skip list networks} (C-SLNs) which are a generalization of SLNs specifically designed to tolerate predictive errors.
  We then present our learning-augmented algorithm, called LASLiN, and prove that its performance is (i) similar to the best possible SLN in case of good predictions ($O(1)$-consistency) and (ii) at most a logarithmic factor away from a standard overlay network in case of arbitrarily wrong predictions ($O(\log^2 n)$-robustness, where $n$ is the number of nodes in the network).
  Finally, we demonstrate the resilience of LASLiN against predictive errors (ie, its smoothness) using various error types on both synthetic and real demands.
\end{abstract}

\section{Introduction}

Overlay networks are one of the key technological advancements that allowed decentralized and distributed systems to scale.
A vast body of work has been devoted to optimize the efficiency of these networks~\cite{LuaCPSL05,RatnasamyFHKS01,AspnesS03,HarveyJSTW03,KifferSLMN21}.
Traditionally, the efficiency of an overlay network is estimated with two key metrics which are the \emph{routing path length} between a pair of nodes and the \emph{maximum degree} of each node in the network.
State-of-the-art overlay networks achieve both a logarithmic routing path length and a logarithmic maximum degree~\cite{chord,FraigniaudG06}.

Up until recently, most work on overlay networks focused on demand-oblivious \emph{worst-case} scenarios, where no prior information about the possible future demand is known.
However, noting that most networking traffics are \emph{sparse} and partly \emph{predictable}~\cite{AvinGGS20}, a new line of work studies \emph{demand-aware} networks~\cite{SplayNet16,avin2017demand,SpiderDAN} which can adapt their connectivity properties to better fit the current traffic.\\
On another front, there has been a recent surge in \emph{learning-augmented} systems which, compared to traditional ones, take an additional parameter (the prediction) to improve their performances, with most recent examples in data structures~\cite{MLSkipList1,MLSkipList2}.
There, the goal is to use a \emph{predicted demand} for each node and adapt the system to better fit the incoming \emph{real demand}.
Since predicted and real demands may mismatch, they analyze what happens when the prediction is perfect (consistency), adversarial (robustness) and anything in between (smoothness, see~\cite{LykourisV18}).

This work considers an overlay network with $n$ nodes and assume it faces a certain real demand.
Like previous works on learning-augmented systems, our goal is to use a predicted demand to answer the overall demand with delays (cost) smaller than classic, prediction-oblivious systems.
We use the framework of consistency and robustness to assess the performances of our overlay algorithm; for that we compare the performances of our learning-augmented algorithm with those of a Skip List Network (\sln, \autoref{def:sln}) whose heights are optimal with respect to the real demand.\\
Unlike previous works however, our framework shows two crucial differences compared to other learning-augmented systems: our algorithm must (1) obey the standard constraints of overlay networks (\autoref{par:network-demand}), and (2) optimize on a \emph{pairwise demand} representing the intensity of the exchanges between any two nodes (\autoref{par:network-alg}).

\subsection{Our Setup}

\paragraph*{Network and Demand.}\label{par:network-demand}
We model an overlay network by an undirected graph with $n$ nodes.
An edge of the graph means that the two end-nodes can communicate directly without the help of other nodes, ie, they hold each other's ids.
This graph is an abstract representation of the routing possibilities between the nodes and may be different from the underlying, physical network that carries the communications.\\
At the considered time, we model a \emph{demand} between the nodes with a \emph{demand matrix} $W \in M_{n \times n}(\mathbb{R})$, where $W(u, v)$ is the amount of communications from $u$ to $v$.

\paragraph*{Overlay Network Algorithms.}\label{par:network-alg}
An overlay network algorithm is an algorithm distributed over the nodes of the network that defines the connectivity between overlay nodes and how routing can be performed.
On top of the physical, static network, the overlay algorithm keeps track of local routing information stored inside each node.

\paragraph*{Prediction Model.}
The initialization procedure of our overlay algorithm (run right before the join procedure) takes a \emph{prediction} as an additional argument (see also~\cite{P2PWithAdvice}).
As a newcomer node joins the overlay network, the prediction influences its neighborhood in the network to (ideally) fit the incoming demand best.
In the case of this paper, the prediction is the (integer) height of the node in the considered skip list network.
Adaptation to a changing demand (and thus changing prediction) is performed by leaving and re-joining the network.


\paragraph*{Consistency, Robustness, and Smoothness.}
The predictions are \emph{untrusted} and should not be followed blindly by the nodes.
In this context, we evaluate the performances of our algorithm with the two most frequently studied metrics called \emph{consistency} and \emph{robustness}~\cite{LykourisV18} (\autoref{def:consistency-robustness}).
In more details, we derive overlay algorithms that provably improve their performance in case of globally accurate predictions (consistency), but still present good performances (poly-logarithmic in $n$ with high probability) in case of arbitrary predictions, even adversarial ones (robustness).
In our distributed context however, we observe that the consistency metric may be too weak and its assumption that all nodes have good predictions unrealistic.
We therefore consider a third metric, \emph{smoothness}, which indicates the performances along the spectrum from consistency to robustness; in particular, what happens when the prediction is good but with a slight noise.
We provide extensive analysis of the smoothness of our algorithm tested on synthetic and real demands and with different types of errors (\autoref{sec:experiments}).

\subsection{Contributions}

The contributions of this paper are threefold.
First, we formulate the Optimum Static Skip List Network (\ossln) problem and prove it can be solved in polynomial time, with an algorithm based on dynamic programming (\autoref{OSSLN:optimum-cost}).
This result extends the previous work of~\cite{martinez1995optimal} to the case with pairwise demands in line with recent demand-aware results~\cite{SplayNet16}.
This algorithm demonstrates that the task of finding optimal predictions for our learning-augmented overlay algorithm is polynomial, which allowed us to run our evaluations.

As our second contribution, we present a new overlay design called \emph{Continuous Skip List Network} (\csln, \autoref{def:continuous_sln}), which generalizes Skip List Networks~\cite{SkipListNetwork} (\sln, \autoref{def:sln}) and the classic skip lists~\cite{skiplists}.
We show that a \csln offers a functioning overlay algorithm in that we implement the three basic primitives join, leave and route.
We then present a specific (without prediction) \csln called \uniform that draws its heights uniformly at random in the interval $[0, 1]$.
We show that \uniform matches the performances of state-of-the-art overlay algorithms with both routing path lengths and maximum degrees in $O(\log n)$ with high probability.

As our third contribution, we present a learning-augmented overlay algorithm, \laslin (Learning-Augmented Skip List Network), which combines our \uniform algorithm with a predicted \sln to strike consistency and robustness guarantees (\autoref{def:consistency-robustness}).
For each node, \laslin takes as prediction an integer between $1$ and $\lceil\ln n\rceil$ and sets it as its \emph{integer heights} at initialization time; it then draws a number in $[0, 1]$ uniformly at random and adds it to the integer heights to obtain its final heights as a \csln; it may then join the network using the join() procedure.
We show that \laslin is comparable to an optimum static \sln when the predictions are correct ($O(1)$-consistent) and at worst a $O(\log^2 n)$ factor away from the optimum static \sln in case of adversarial predictions ($O(\log^2n)$-robust).

\begin{figure*}[t]
\centering
\begin{minipage}{.45\textwidth}
  \centering
  \vspace{-1cm}
    \begin{tikzpicture}[scale=0.8]
      \pgfmathsetmacro{\gap}{0.98}
      \pgfmathsetmacro{\width}{0.5}
      \pgfmathsetmacro{\vsp}{0.01}
      \pgfmathsetmacro{\hsp}{0.03}
      \pgfmathsetmacro{\u}{3}
      \pgfmathsetmacro{\v}{10}
      \pgfmathsetmacro{\verticalgrad}{0.9}
      \pgfmathsetseed{44} 

      \pgfmathsetmacro{\htwo}{\verticalgrad * 1}
      \pgfmathsetmacro{\hthree}{\verticalgrad * 3}
      \pgfmathsetmacro{\hfour}{\verticalgrad * 3}
      \pgfmathsetmacro{\hfive}{\verticalgrad * 1}
      \pgfmathsetmacro{\hsix}{\verticalgrad * 3}
      \pgfmathsetmacro{\hseven}{\verticalgrad * 2}
      \pgfmathsetmacro{\height}{\verticalgrad * 3}
      \pgfmathsetmacro{\hnine}{\verticalgrad * 2}

      \draw[very thick, fill=ForestGreen!90] (\gap*2,0) -- ++(0,\htwo) -- ++(\width,0) -- ++(0,-\htwo);
      \draw[very thick, fill=ForestGreen!90] (\gap*3,0) -- ++(0,\hthree) -- ++(\width,0) -- ++(0,-\hthree);
      \draw[very thick, fill=ForestGreen!90] (\gap*4,0) -- ++(0,\hfour) -- ++(\width,0) -- ++(0,-\hfour);
      \draw[very thick, fill=ForestGreen!90] (\gap*5,0) -- ++(0,\hfive) -- ++(\width,0) -- ++(0,-\hfive);
      \draw[very thick, fill=ForestGreen!90] (\gap*6,0) -- ++(0,\hsix) -- ++(\width,0) -- ++(0,-\hsix);
      \draw[very thick, fill=ForestGreen!90] (\gap*7,0) -- ++(0,\hseven) -- ++(\width,0) -- ++(0,-\hseven);
      \draw[very thick, fill=ForestGreen!90] (\gap*8,0) -- ++(0,\height) -- ++(\width,0) -- ++(0,-\height);
      \draw[very thick, fill=ForestGreen!90] (\gap*9,0) -- ++(0,\hnine) -- ++(\width,0) -- ++(0,-\hnine);


      
      \draw[<->, thick] (2*\gap + \width + \hsp, \htwo - \vsp) -- (3*\gap - \hsp, \htwo - \vsp);

      \draw[<->, thick] (3*\gap + \width + \hsp, \hfour - \vsp) -- (4*\gap - \hsp, \hfour - \vsp);

      \draw[<->, thick] (4*\gap + \width + \hsp, \hfive - \vsp) -- (5*\gap - \hsp, \hfive - \vsp);
      \draw[<->, thick] (4*\gap + \width + \hsp, \hfour - \vsp) -- (6*\gap - \hsp, \hfour - \vsp);

      \draw[<->, thick] (5*\gap + \width + \hsp, \hfive - \vsp) -- (6*\gap - \hsp, \hfive - \vsp);
      
      \draw[<->, thick] (6*\gap + \width + \hsp, \hseven - \vsp) -- (7*\gap - \hsp, \hseven - \vsp);
      \draw[<->, thick] (6*\gap + \width + \hsp, \height - \vsp) -- (8*\gap - \hsp, \height - \vsp);

      \draw[<->, thick] (7*\gap + \width + \hsp, \hseven - \vsp) -- (8*\gap - \hsp, \hseven - \vsp);

      \draw[<->, thick] (8*\gap + \width + \hsp, \hnine - \vsp) -- (9*\gap - \hsp, \hnine - \vsp);

      \coordinate (P0) at (\gap*\u + 0.5*\gap,\hthree);
      \coordinate (P1) at (\gap*\u + 1*\gap,\hthree);
      \coordinate (P2) at (\gap*4,\hfour);
      \coordinate (P3) at (\gap*9,\hfour);
      \coordinate (P4) at (\gap*9,\hnine);
      \coordinate (P5) at (\gap*10,\hnine);

      \draw[very thick] (2*\gap,0) -- (9*\gap + \width,0);

      \draw[thick] (\gap*1.75,0) -- (\gap*1.75,\verticalgrad*4cm);
      \foreach \y in {0, 1, 2, 3, 4} {
        \draw[thick] (\gap*1.75 - 0.1, \verticalgrad*\y) -- (\gap*1.75 + 0.1, \verticalgrad*\y);
        \node at (\gap*1.75 - 0.3, \verticalgrad*\y) {\y};
      }

      \pgfmathsetmacro{\labelgap}{-0.3}
      \node at (\gap*2 + 0.5*\width, \labelgap) {1};
      \node at (\gap*3 + 0.5*\width, \labelgap) {2};
      \node at (\gap*4 + 0.5*\width, \labelgap) {3};
      \node at (\gap*5 + 0.5*\width, \labelgap) {4};
      \node at (\gap*6 + 0.5*\width, \labelgap) {5};
      \node at (\gap*7 + 0.5*\width, \labelgap) {6};
      \node at (\gap*8 + 0.5*\width, \labelgap) {7};
      \node at (\gap*9 + 0.5*\width, \labelgap) {8};

    \end{tikzpicture}
    \caption{
      A \sln with eight nodes.
      The node ids range from $1$ to $8$.
      The double-arrowed, horizontal lines are the edges of the \sln.
    }
    \label{fig:skip-list-network}   
  \end{minipage}
  \hspace{1cm}
\begin{minipage}{.45\textwidth}
  \vspace{-0.95cm}
  \centering
    \begin{tikzpicture}[scale=0.8]
      \pgfmathsetmacro{\gap}{0.98}
      \pgfmathsetmacro{\width}{0.5}
      \pgfmathsetmacro{\vsp}{0.01}
      \pgfmathsetmacro{\hsp}{0.03}
      \pgfmathsetmacro{\u}{3}
      \pgfmathsetmacro{\v}{10}
      \pgfmathsetmacro{\verticalgrad}{0.9}

      \pgfmathsetseed{38} 

      \pgfmathsetmacro{\htwo}{\verticalgrad*0.83}
      \pgfmathsetmacro{\hthree}{\verticalgrad*0.73}
      \pgfmathsetmacro{\hfour}{\verticalgrad*0.2}
      \pgfmathsetmacro{\hfive}{\verticalgrad*0.31}
      \pgfmathsetmacro{\hsix}{\verticalgrad*0.58}
      \pgfmathsetmacro{\hseven}{\verticalgrad*0.22}
      \pgfmathsetmacro{\height}{\verticalgrad*0.92}
      \pgfmathsetmacro{\hnine}{\verticalgrad*0.33}
      \pgfmathsetmacro{\hten}{\verticalgrad*0.34}

      \draw[very thick, fill=Orange!90] (\gap*2,1*\verticalgrad) -- ++(0,\htwo) -- ++(\width,0) -- ++(0,-\htwo);
      \draw[very thick, fill=Orange!90] (\gap*3,3*\verticalgrad) -- ++(0,\hthree) -- ++(\width,0) -- ++(0,-\hthree);
      \draw[very thick, fill=Orange!90] (\gap*4,3*\verticalgrad) -- ++(0,\hfour) -- ++(\width,0) -- ++(0,-\hfour);
      \draw[very thick, fill=Orange!90] (\gap*5,1*\verticalgrad) -- ++(0,\hfive) -- ++(\width,0) -- ++(0,-\hfive);
      \draw[very thick, fill=Orange!90] (\gap*6,3*\verticalgrad) -- ++(0,\hsix) -- ++(\width,0) -- ++(0,-\hsix);
      \draw[very thick, fill=Orange!90] (\gap*7,2*\verticalgrad) -- ++(0,\hseven) -- ++(\width,0) -- ++(0,-\hseven);
      \draw[very thick, fill=Orange!90] (\gap*8,3*\verticalgrad) -- ++(0,\height) -- ++(\width,0) -- ++(0,-\height);
      \draw[very thick, fill=Orange!90] (\gap*9,2*\verticalgrad) -- ++(0,\hnine) -- ++(\width,0) -- ++(0,-\hnine);

      \draw[<->, thick] (2*\gap + \width + \hsp, \htwo - \vsp+1*\verticalgrad) -- (3*\gap - \hsp, \htwo - \vsp+1*\verticalgrad);

      \draw[<->, thick] (3*\gap + \width + \hsp, \hfour - \vsp+3*\verticalgrad) -- (4*\gap - \hsp, \hfour - \vsp+3*\verticalgrad);
      \draw[<->, thick] (3*\gap + \width + \hsp, \hsix - \vsp+3*\verticalgrad) -- (6*\gap - \hsp, \hsix - \vsp+3*\verticalgrad);
      \draw[<->, thick] (3*\gap + \width + \hsp, \hthree - \vsp+3*\verticalgrad) -- (8*\gap - \hsp, \hthree - \vsp+3*\verticalgrad);

      \draw[<->, thick] (4*\gap + \width + \hsp, \hfive - \vsp+1*\verticalgrad) -- (5*\gap - \hsp, \hfive - \vsp+1*\verticalgrad);
      \draw[<->, thick] (4*\gap + \width + \hsp, \hfour - \vsp+3*\verticalgrad) -- (6*\gap - \hsp, \hfour - \vsp+3*\verticalgrad);

      \draw[<->, thick] (5*\gap + \width + \hsp, \hfive - \vsp+1*\verticalgrad) -- (6*\gap - \hsp, \hfive - \vsp+1*\verticalgrad);

      \draw[<->, thick] (6*\gap + \width + \hsp, \hseven - \vsp+2*\verticalgrad) -- (7*\gap - \hsp, \hseven - \vsp+2*\verticalgrad);
      \draw[<->, thick] (6*\gap + \width + \hsp, \hsix - \vsp+3*\verticalgrad) -- (8*\gap - \hsp, \hsix - \vsp+3*\verticalgrad);

      \draw[<->, thick] (7*\gap + \width + \hsp, \hseven - \vsp+2*\verticalgrad) -- (8*\gap - \hsp, \hseven - \vsp+2*\verticalgrad);

      \draw[<->, thick] (8*\gap + \width + \hsp, \hnine - \vsp+2*\verticalgrad) -- (9*\gap - \hsp, \hnine - \vsp+2*\verticalgrad);



      \pgfmathsetmacro{\hone}{1*\verticalgrad}
      \pgfmathsetmacro{\htwo}{1*\verticalgrad}
      \pgfmathsetmacro{\hthree}{3*\verticalgrad}
      \pgfmathsetmacro{\hfour}{3*\verticalgrad}
      \pgfmathsetmacro{\hfive}{1*\verticalgrad}
      \pgfmathsetmacro{\hsix}{3*\verticalgrad}
      \pgfmathsetmacro{\hseven}{2*\verticalgrad}
      \pgfmathsetmacro{\height}{3*\verticalgrad}
      \pgfmathsetmacro{\hnine}{2*\verticalgrad}
      \pgfmathsetmacro{\hten}{2*\verticalgrad}

      \draw[very thick, fill=ForestGreen!90] (\gap*2,0) -- ++(0,\htwo) -- ++(\width,0) -- ++(0,-\htwo);
      \draw[very thick, fill=ForestGreen!90] (\gap*3,0) -- ++(0,\hthree) -- ++(\width,0) -- ++(0,-\hthree);
      \draw[very thick, fill=ForestGreen!90] (\gap*4,0) -- ++(0,\hfour) -- ++(\width,0) -- ++(0,-\hfour);
      \draw[very thick, fill=ForestGreen!90] (\gap*5,0) -- ++(0,\hfive) -- ++(\width,0) -- ++(0,-\hfive);
      \draw[very thick, fill=ForestGreen!90] (\gap*6,0) -- ++(0,\hsix) -- ++(\width,0) -- ++(0,-\hsix);
      \draw[very thick, fill=ForestGreen!90] (\gap*7,0) -- ++(0,\hseven) -- ++(\width,0) -- ++(0,-\hseven);
      \draw[very thick, fill=ForestGreen!90] (\gap*8,0) -- ++(0,\height) -- ++(\width,0) -- ++(0,-\height);
      \draw[very thick, fill=ForestGreen!90] (\gap*9,0) -- ++(0,\hnine) -- ++(\width,0) -- ++(0,-\hnine);

      \coordinate (P0) at (\gap*\u + 0.5*\gap,\hthree);
      \coordinate (P1) at (\gap*\u + 1*\gap,\hthree);
      \coordinate (P2) at (\gap*4,\hfour);
      \coordinate (P3) at (\gap*9,\hfour);
      \coordinate (P4) at (\gap*9,\hnine);
      \coordinate (P5) at (\gap*10,\hnine);
      \coordinate (P6) at (\gap*10,\hten);
      \coordinate (P7) at (\gap*10 + 0.5*\gap,\hten);

      \draw[very thick] (2*\gap,0) -- (9*\gap + \width,0);

      \draw[very thick] (2*\gap,0) -- (9*\gap + \width,0);

      \draw[thick] (\gap*1.75,0) -- (\gap*1.75,\verticalgrad*4cm);
      \foreach \y in {0, 1, 2, 3, 4} {
        \draw[thick] (\gap*1.75 - 0.1, \verticalgrad*\y) -- (\gap*1.75 + 0.1, \verticalgrad*\y);
        \node at (\gap*1.75 - 0.3, \verticalgrad*\y) {\y};
      }

      \pgfmathsetmacro{\labelgap}{-0.3}
      \node at (\gap*2 + 0.5*\width, \labelgap) {1};
      \node at (\gap*3 + 0.5*\width, \labelgap) {2};
      \node at (\gap*4 + 0.5*\width, \labelgap) {3};
      \node at (\gap*5 + 0.5*\width, \labelgap) {4};
      \node at (\gap*6 + 0.5*\width, \labelgap) {5};
      \node at (\gap*7 + 0.5*\width, \labelgap) {6};
      \node at (\gap*8 + 0.5*\width, \labelgap) {7};
      \node at (\gap*9 + 0.5*\width, \labelgap) {8};

    \end{tikzpicture}
    \caption{
      A \csln obtained by adding each height of \autoref{fig:skip-list-network} with a random number in $[0, 1]$.
      This is a possible outcome of \laslin with predicted heights in green.
    }
    \label{fig:laslin}
  \end{minipage}
\end{figure*}

\section{Model and Metrics}

In this section, we formally define our model and the metrics we use to evaluate our overlay algorithms.

\begin{definition}[\sln]
\label{def:sln}
A Skip List Network (\sln) $\mathcal{N}$ is a triple $(V, h, E)$ where:
  \begin{itemize}
  \item
    $V$ is a totally ordered set of node ids;
  \item
    $h$ is a height function $ h : V \rightarrow \mathbb{N}^+$;
  \item
    $E$ is the set of edges such that $\{u, v\} \in E$ if and only if
    \begin{align*}
       \forall x \in V \; \text{ s.t. }\; u < x < v, \quad h_x < \min(h_u, h_v)
    \end{align*}
  \end{itemize}
\end{definition}


\noindent
We redefine hereafter the greedy, local routing path specific to \sln.
Intuitively, the routing path goes to the next neighbor that is the closest to the destination (in terms of ids) without ``flying over'' the destination.
\begin{definition}[Routing path]
 \label{def:routing-path}
 Let $\mathcal{N} = (V, h, E)$ be a \sln and $u, v \in V$ be two different nodes.
 The routing path from $u$ to $v$ in $\mathcal{N}$ is the sequence of edges $e_1, e_2 \dots e_k$ such that (1) $u \in e_1$, (2)$v \in e_k$, and (3) $\forall i \in [k]$, if $u < v$ then $e_i = \{x, y\}$ where $y$ is the largest node neighbor of $x$ such that $x < y \le v$, otherwise if $v < u$ then $e_i = \{x, y\}$ where $y$ is the smallest node neighbor of $x$ such that $v \le y < x$
\end{definition}

\begin{definition}[Cost]
  \label{def:cost}
  Let $\mathcal{N} = (V, h, E)$ be an \sln and $W$ be a demand matrix.
  The cost function is:
  \begin{align*}
    \text{Cost}(\mathcal{N}, W) = \sum_{u, v \in V} W(u, v) \cdot d_{\mathcal{N}}(u, v),
  \end{align*}
  where $d_{\mathcal{N}}(u, v)$ is the routing path length between $u$ and $v$ in $\mathcal{N}$.
\end{definition}

\begin{definition}[Optimum Static Skip List Network (OSSLN) problem]
  \label{def:ossln}
  Given a set of nodes $V$ and a demand matrix $W$, the objective of the Optimum Static \sln (OSSLN) problem is to find a height assignment $h: V\rightarrow \mathbb{N}^+$ that minimizes the cost of the  \sln $\mathcal{N}=(V, h, E)$.
\end{definition}

\begin{definition}[Overlay Algorithm]
  \label{def:overlay-algorithm}
  An overlay algorithm is a distributed algorithm running on each node of the network, which provides three procedures:
  \begin{itemize}
  \item
    \textbf{initialization:} Executed before a node joins the network.
    In this paper, sets the height of each node in the \sln. 
  \item
    \textbf{join:} Executed to join the network.
    Takes as argument the id of a node that is already in the network.
  \item
    \textbf{leave:} Executed to leave the network.
  \end{itemize}
\end{definition}

\begin{definition}[Learning-Augmented Overlay Algorithm]
  \label{def:learning-augmented}
  A learning-augmented overlay algorithm ALG is an overlay algorithm whose initialization procedure takes an additional parameter, the input prediction.
  In this paper, the input prediction consists of an estimated height assignment for each node. 
\end{definition}

\begin{definition}[Consistency and Robustness]
  \label{def:consistency-robustness}
  We say that a learning-augmented \csln algorithm is
  \begin{itemize}
  \item
    $\alpha$-consistent if, for all demand matrices $W$ and all set of nodes $V$, there exists a set of predictions (assignment of heights) such that the expected cost of the algorithm is at most $\alpha$ times the cost of an optimal optimum static \sln for $V$ and $W$.
  \item
    $\beta$-robust if, for any demand matrix $W$, any set of nodes $V$, and any set of prediction (assignment of heights), the expected cost of the algorithm is at most $\beta$ times the cost of an optimal optimum static \sln for  $V$ and $W$. 
  \end{itemize}
\end{definition}

\section{Optimum Static Skip List Network}
\label{sec:ossln}

Given a set of nodes $V$ and a demand matrix $W$, the objective of the Optimum Static Skip List Network (OSSLN) problem is to find a height assignment $h: V\rightarrow \mathbb{N}^+$ that minimizes the total communication cost (\autoref{def:cost}) of the  \sln $\mathcal{N}=(V, h, E)$ (\autoref{def:sln}):
\begin{equation}
    \ossln(V, W) = \min_h\;Cost(\mathcal{N}, W). \label{eq:OSSLN-objective-function}
\end{equation}
where $\mathcal{N}$ is an \sln with node set $V$ and heights $h$.\\

In \cite{SplayNet16}, the authors presented a dynamic programming algorithm to find the Optimum static distributed Binary Search Tree (OBST) in polynomial time ($O(n^3)$). This algorithm leverages a crucial property: once a node $v$ is chosen as the root of a tree induced by the nodes in a given interval, the subproblems induced by the node id intervals to the left and to the right of $v$ can be solved independently.
The \ossln problem exhibits a similar structure, but with a crucial difference. 

If we define a \textit{pivot node} $x\in [l,r]$, $1\leq l\leq r\leq n,$ such that $h_x = \max_{w\in [l,r]}h_w$, $x$ also partitions the \ossln problem into two subproblems (on the intervals $[l,v)$ and $(v,r]$). However, a critical distinction arises when $h_x < \max_{w\in [l,r]}h_w$, in which case $x$ will not be a part of all the paths between node pairs $(u, v)$ $u\in [l, x), v\in (x, r]$. Specifically, for a node $x\in [l+1,r-1]$ such that $h_{l}, h_{r} > h_x$, any path between nodes $(u, v)$ where $u \leq l$ and $v \geq r$ does not include $x$.
This changes the optimal substructure of the problem, requiring a more sophisticated dynamic programming approach.

Given a \sln $\mathcal{N}$, we define a \textit{path} $P_h(l, r)$ to be the set of nodes in the \textit{routing path} between the nodes $l$ and $r$ (\autoref{def:routing-path}), excluding $l$, i.e., $l\notin P_h(l,r)$. This is a crucial simplification that allows the problem to be decomposed. Under this assumption, we have $d(l, r) = |P_h(l, r)|$. W.l.o.g., we assume that $P_h(l,r) = P_h(r, l)$.

Next, we define $L(x, h) = \min\{i \leq x\;|\;h_j \leq h_x,\;\forall j \in [i, x]\}$, i.e., the node of minimum id between the set of all (consecutive) nodes of height lower or equal to $h_x$, to the \textit{left} of $x$, and, analogously, $R(x, h) = \max\{i \geq x\;|\;h_j \leq h_x,\;\forall j \in [x, i]\}$, i.e., the node with maximum id between the set of all nodes of height lower or equal to $h_x$, to the \textit{right} of $x$. A node $x \in V$ belongs to a path $P_h(u,v)$ if and only if there does not exist any pair of nodes $(u', v')$ such that $u \leq u' < x$, $x < v' \leq v$, and (both) $h_{u'}, h_{v'} > h_x$. 

Finally, given an arbitrary $x \in V$, we define $P_x(h)$ as the set of all pairs $(u, v) \in V \times V$, such that $x \in P_h(u,v)$. 

\begin{lemma}\label{lemma:OSSLN-path-belong}
If $h$ is a solution for the $OSSLN(V,W)$, then $\forall x\in V$ we have that:
\begin{eqnarray}
    P_x(h) &=& 
    \{ (u, v) \mid (u < x) \wedge v \in [x, R(x, h)] \}\cup \nonumber\\
    && \{ (u, v) \mid u \in [L(x, h), x)] \wedge (v > R(x, h)) \}.\nonumber
     \label{eq:set-of-paths-of-x}
\end{eqnarray} 
\end{lemma}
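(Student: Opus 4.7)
The plan is to derive the two sets appearing in the lemma as the two branches of a direct case analysis built on top of the ``blocker'' characterization stated right before the lemma (namely, $x \in P_h(u, v)$ iff there is no pair $(u', v')$ with $u \leq u' < x < v' \leq v$ and $h_{u'}, h_{v'} > h_x$). Without loss of generality I take $u < v$, using $P_h(u, v) = P_h(v, u)$. I first observe that any pair in $P_x(h)$ must satisfy $u < x \le v$: if $u \geq x$ then $u = x$ is excluded by the definition of $P_h$ and $u > x$ puts the whole routing path to the right of $x$, and symmetrically $v < x$ forces the path to lie entirely to the left of $x$. The boundary case $v = x$ is kept, since $P_h$ only excludes the starting vertex.

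Next I would translate the blocker condition using $L$ and $R$. By the definition of $L(x, h)$ as the leftmost index of the maximal plateau $[L(x, h), x]$ of heights at most $h_x$, the existence of some $u' \in [u, x)$ with $h_{u'} > h_x$ is equivalent to $u < L(x, h)$: if $u \geq L(x, h)$ the plateau forbids such a $u'$, whereas if $u < L(x, h)$ then $u' = L(x, h) - 1$ witnesses it by the minimality of $L(x, h)$. A symmetric argument gives that the existence of $v' \in (x, v]$ with $h_{v'} > h_x$ is equivalent to $v > R(x, h)$. Combining the two, $x \notin P_h(u, v)$ iff both $u < L(x, h)$ and $v > R(x, h)$.

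Negating, $x \in P_h(u, v)$ iff $v \leq R(x, h)$ or $u \geq L(x, h)$. Combined with $u < x \leq v$, this yields exactly two cases: either $v \leq R(x, h)$, giving $v \in [x, R(x, h)]$ with $u$ any vertex strictly less than $x$ (the first set in \eqref{eq:set-of-paths-of-x}), or $v > R(x, h)$, in which case we additionally need $u \geq L(x, h)$, i.e.\ $u \in [L(x, h), x)$ (the second set). Conversely, every pair in either set satisfies the negation of the blocker condition by the same translation, so $x$ does lie on its routing path. Taking the union over both cases recovers exactly $P_x(h)$.

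The main obstacle is to keep the translation between the two-sided blocker condition and the $L$/$R$ inequalities fully rigorous at the boundaries, in particular the degenerate cases $L(x, h) = 1$, $R(x, h) = n$, $v = x$, and one-sided blocking (where a blocker exists on only one side of $x$). Each of these is handled by the maximality of the $L(x, h)$ and $R(x, h)$ plateaus, but should be checked explicitly to ensure no pair is double-counted or missed between the two sets.
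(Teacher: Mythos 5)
Your proof is correct and follows essentially the same route as the paper's: both arguments reduce membership of $x$ in $P_h(u,v)$ to the two-sided blocker condition and use the witnesses $L(x,h)-1$ and $R(x,h)+1$ (whose heights exceed $h_x$ by the maximality of the plateaus) to separate the two sets in the union. Your version merely packages the two inclusion directions of the paper's proof into a single if-and-only-if translation, which is a presentational rather than substantive difference.
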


\begin{proof}
If $u < x$ and $x\leq v \leq R(x,h)$, then, by definition of $R(x,h)$, $h_x \geq h_v \geq \min(h_u, h_v)$. Since $x$ lies between $u$ and $v$ and there are no edges ``above" $h_x$  in the interval $[u, v]$ (i.e., $\not\exists (w,z)\in E$, s.t. $h_w > h_x$ and $h_z > h_x$, $[w,z]\subseteq [u,v]$), $x$ must be a part of $P_h(u, v)$.

Similarly, if $L(x,h) \leq u < x$ and $v > R(x, h)$, then, again by definition of $L(x, h)$, $h_x \geq h_u \geq \min(h_u, h_v)$. Since $x$ lies between $u$ and $v$ and there are no edges "above" $h_x$ in the interval $[u, v]$, $x$ must be a part of $P_h(u, v)$

It remains to show that no other pair $(u,v)$ belongs to $P_x(h)$. If $u \geq x$, then by definition of $P_h(u,v)$, $x \notin P(u, v)$. Otherwise, suppose $u < L(x, h)$ and $v > R(x, h)$. Let $u' = L(x, h) - 1$ and $v' = R(x, h) - 1$, by definitions of $L(x,h)$ and $R(x,h)$, we have that $h_x < h_{u'}$ and $h_x < h_{v'}$. Since $u \leq u' < x < v' \leq v$, we have that:
\begin{equation}
    |P_h(u, u') \cup \{v'\} \cup P_h(v', v)| < |P_h(u, x) \cup \{x\} \cup P_h(x, v)|.\nonumber
\end{equation}
Because $P_h(u, v)$ is the shortest path from $u$ to $v$, it follows that $x \notin P_h(u, v)$. \qedhere
\end{proof}

Since $d(u, v) = |P_h(u, v)|$, we can now rewrite the total cost of an \sln $\mathcal{N}=(V,h,E)$ as:
\begin{eqnarray}
  Cost(\mathcal{N}, W) &=& \sum_{u, v \in V} W(u, v) \cdot d(u, v) \nonumber\\
  &=& \sum_{u, v \in V} \sum_{x \in P_h(u, v)} W(u, v) \nonumber\\
  &=& \sum_{x \in V} \sum_{(u, v) \in P_x(h)} W(u, v). \nonumber
\end{eqnarray}

This means that the total cost is equal to the sum of contributions of all nodes $x \in V$, where the \textit{contribution of a node} $x$ to the total cost is defined as:
\begin{equation}
\sum_{(u, v) \in P_x(h)} W(u, v) \label{eq:vertex-cost-contribution}
\end{equation}

\noindent\textbf{OSSLN with Unbounded height (OSSLNU):}
We now present a dynamic programming algorithm for the \ossln (\autoref{def:ossln}). Furthermore, we show that the algorithm's complexity can be reduced to $\mathcal{O}(n^3)$.

A crucial observation is that, for any solution to the OSSLNU problem, there exists a solution of cost less than or equal to that of the former, in which, $\forall [l,r], 1\leq l\leq r\leq n$, there is a unique \emph{pivot} node, i.e., a unique node whose height is the maximum height in that interval.

\begin{lemma} \label{lemma:OSSLNu-unique-height}
Let $h$ be an optimum solution to $\text{OSSLNU}(V, W)$. Then, there must exist a solution $h'$ such that $Cost((V,h',E'), W) \leq Cost((V, h, E), W)$, where, for any interval $[l, r], 1\leq l\leq r\leq n,$ there is exactly one node $x \in [l, r]$ with height $h'_x = \max_{w\in [l,r]}h'_w$.
\end{lemma}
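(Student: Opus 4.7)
The plan is to exhibit an explicit $h'$ obtained from $h$ by tie-breaking such that $h'$ has all distinct values (so every interval $[l,r]$ automatically has a unique argmax) while the induced \sln retains every edge of $\mathcal{N}(V,h,E)$. Concretely, I would sort the vertices by the pair $(h_v, v)$ lexicographically and set $h'_v$ to be the rank of $v$ in this ordering, giving a bijection $h' : V \to [1,n]$. The unique-argmax property of $h'$ follows immediately from its distinct values, so the work is in bounding the cost.

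The first key step is to verify that $E \subseteq E'$, where $E'$ is the edge set induced by $h'$ via Definition~\ref{def:sln}. Let $\{u,v\} \in E$; then every intermediate $w \in (u,v)$ satisfies $h_w < \min(h_u, h_v)$ strictly. Because the rank map preserves strict inequalities, $h'_w < h'_u$ and $h'_w < h'_v$, so $h'_w < \min(h'_u, h'_v)$ for all intermediate $w$ and hence $\{u,v\} \in E'$. (Note that breaking ties can \emph{add} edges, e.g., when $h_w = h_u = h_v$ on a block of tied peaks, but never remove one.)

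The second step is to show that greedy routing distances cannot grow under edge addition, i.e., $d_{\mathcal{N}(V,h',E')}(u,v) \leq d_{\mathcal{N}(V,h,E)}(u,v)$ for all $u,v \in V$. Summing against $W$ then yields $\text{Cost}(\mathcal{N}(V,h',E'),W) \leq \text{Cost}(\mathcal{N}(V,h,E),W)$, which completes the proof of the lemma.

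The main obstacle is this last step: greedy routing is myopic, so in an arbitrary graph adding edges can in principle produce longer greedy paths. The approach I would take is an induction on $|u-v|$. Let $x_0=u, x_1, \dots, x_k=v$ be the greedy path in $\mathcal{N}(V,h,E)$ and $y_0=u, y_1, \dots$ the greedy path in $\mathcal{N}(V,h',E')$. The first hop satisfies $y_1 \geq x_1$ since $E \subseteq E'$ gives $u$ a superset of candidate neighbors. The delicate case is $y_1 \in (x_1, x_2)$: here one exploits the \sln edge condition, namely $h_{y_1} < \min(h_{x_1}, h_{x_2})$ (which follows from $\{x_1,x_2\} \in E$ and $x_1 < y_1 < x_2$), together with $E \subseteq E'$, to show that from $y_1$ there is a neighbor in $E'$ reaching at least $x_2$, so $y_2 \geq x_2$, and the induction hypothesis closes the argument. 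An alternative would be to use Lemma~\ref{lemma:OSSLN-path-belong} and reason vertex-by-vertex about the sets $P_x(h)$ and $P_x(h')$, but this is complicated by the fact that $L(x,h)$ and $R(x,h)$ can shift in either direction when ties are broken, so the pointwise routing argument seems cleaner.
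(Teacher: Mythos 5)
Your proposal is correct and rests on the same core idea as the paper's proof --- perturb the heights in an order-preserving way so that ties are broken and then argue that greedy routing paths cannot lengthen --- but it executes it differently: you break all ties at once with a global rank map $(h_v,v)\mapsto h'_v$, whereas the paper repeatedly bumps a single tied maximum $x$ (setting $H'_v=h_v+1$ for $v=x$ and for all $v$ with $h_v>h_x$) and argues locally that no path grows. Your one-shot construction is arguably cleaner since it avoids iterating and arguing termination. Two remarks on your execution of the key step. First, the ``delicate case'' $y_1\in(x_1,x_2)$ that you plan to handle is actually vacuous: $\{u,y_1\}\in E'$ forces $h'_{x_1}<h'_{y_1}$ and hence $h_{x_1}\le h_{y_1}$, while $\{x_1,x_2\}\in E$ forces $h_{y_1}<\min(h_{x_1},h_{x_2})\le h_{x_1}$, a contradiction; the same argument shows the first $E'$-hop from $u$ always lands on some vertex $x_j$ of the $E$-path, after which the induction on the suffix is immediate --- so you do not need to establish anything like ``$y_2\ge x_2$'' from inside that case. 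Second, the alternative you dismiss is in fact the shortest route: using the paper's membership criterion ($x\in P_h(u,v)$ iff no pair $u\le u'<x<v'\le v$ has $h_{u'},h_{v'}>h_x$), the fact that the rank map preserves strict inequalities gives $P_{h'}(u,v)\subseteq P_h(u,v)$ directly, with no reference to $L(x,h)$ or $R(x,h)$ and no routing dynamics at all. Either completion is fine; the plan as written is sound and the lemma follows.
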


\begin{proof}
Suppose, for contradiction, that any possible optimal solution contains at least one interval $[l, r]$ with at least two vertices
$x, y \in [l, r]$ with $h_x = h_y = \max(h_{l, r})$ and $x \neq y$.

Define $h'$ as follows for any vertex $v \in V$:

$
h'_v = 
    \begin{cases} 
    h_v + 1 & \text{if } v = x \text{ or } h_v > h_x \\
    h_v & \text{otherwise} 
    \end{cases}
$

Note that, while $h'_x > h'_y$, this height assignment is order preserving. If $h_a > h_b$ then $h'_a > h'_b$.

If $x \notin P_h(u, v)$, since we preserved the order between $x$ and all vertices taller than $x$, $x$ cannot appear as an obstacle between any such vertices. Analogously, since we preserved the order between all vertices with at most the same height as $x$ there can't be any new edges between vertices with at most the same height as $x$, so the path is unmodified. 

If $x \in P_h(u, v)$, we know that:
\[
    P_h(u,v) = P_h(u, x)\;\cup \{x\} \; \cup P_h(x, v).
\]
From our definition of path, we know that $x \notin P_h(x,v)$, and therefore $P_h(x, v) = P_{h'}(x, v)$. Now, let's consider a vertex $z$, such that $u \leq z < x$ and $h_z > h_x$. If such a vertex $z$ does not exist, then we can infer that $P_h = P_{h'}$. Furthermore, if there does exist such a vertex $z$, then:
\[
    |P_{h'}| = |P_h(u, z)\;\cup \{x\}\; \cup P_h(x, v)| \leq |P_h|.
\]
Since $\text{Cost}((V,h,E), W) = \sum_{(u, v) \in V} W(u, v) \cdot d(u, v)$ and $d(u, v) = |P_h(u, v)|$, and any path in $h'$ is of at most the same size as in $h$, then, $\text{Cost}((V,h',E'), W) \leq \text{Cost}((V,h,E), W)$. Hence, the lemma holds.
\end{proof}

In OSSLNU, because we are only interested in the relative order of node heights, rather than in their absolute values, we have that $(nl = l)$ and $(nr = r)$, $\forall [l, r], 1\leq l\leq r\leq n$, and the height parameter $h$ becomes unnecessary. This simplifies the recurrence relation, reducing the complexity to $\mathcal{O}(n^3)$.

\begin{theorem}\label{OSSLNu:optimum-cost}
Consider an interval $[l, r]$, $1\leq l \leq r \leq n$.
Let $\text{OSSLNU}(l, r, W)$ denote the minimum total cost contributed by all nodes $v \in [l, r]$. 
Let $x \in [l, r]$ be a candidate pivot node. Define $l$ as the leftmost index such that $\forall x, \in[l,r],\forall v \in [l, x],\, h_v \leq h_x$, and similarly define $r$ as the rightmost index such that $\forall v \in [x, r],\, h_v \leq h_x$. Then, the recurrence relation $\text{OSSNU}(l, r, W)=$
\begin{eqnarray}
    \min_{x \in [l, r]} \Big(
        \text{OSSNU}(l, x - 1, W) + 
        \text{OSSNU}(x + 1, r, W) + \nonumber\\
        \sum_{u < x} \sum_{v\in[x, r]} W(u, v)
        + \sum_{u \in [l, x)}\sum_{v > r}W(u, v)\Big). \nonumber 
\end{eqnarray}
\end{theorem}
\begin{proof}
The proof is by induction on the size of the interval $[l, r]$.

Note that, by Lemma 2 once a pivot vertex $x\in[l,r]$ is chosen as the highest vertex within the interval $[l, r]$, we can assume, without loss of generality, that all vertices in the optimal solutions to the left subinterval $[l, x - 1]$ and to the right subinterval $[x + 1, r]$ have heights strictly lower than $h_x$. 

Therefore, for a solution $h$ to $\text{OSSLNU}(l, r, W)$, it follows that $l = L(x, h)$ and $r = R(x, h)$.
    
\noindent\textbf{Base Case $(l = r)$:} In this case, the interval contains a single vertex $x = l = r$.  In this case, the cost reduces to the contribution of $x$ to the solution cost, which, from Lemma 1, and $L(x, h) = l$ and $R(x, h) = r$ as we demonstrated above, it follows that this expression corresponds exactly to:
\begin{eqnarray}
  \sum_{u < x} \sum_{v\in[x, r]} W(u, v) 
  + \sum_{u \in [l, x)}\sum_{v > r}W(u, v) \label{eq:vertex-cost-contribution}
\end{eqnarray}
\noindent\textbf{Inductive Step:} Suppose the recurrence holds for all subintervals of $[l, r]$ of size less than $(r - l + 1)$. For an interval $[l, r]$ of size greater than one, we consider a candidate pivot vertex $x \in [l, r]$. 

The total cost of placing $x$ as pivot of the interval $[l, r]$ equals to the sum of:
\begin{itemize}
    \item The optimal cost for the left subinterval $[l, x-1]$,
    \item The optimal cost for the right subinterval $[x+1, r]$,
    \item The cost of all paths that include $x$, which by Lemma 1 is given by the Equation 4.
\end{itemize}

Since we minimize this total over all pivot choices $x \in [l, r]$, the recurrence holds.
\end{proof}

\noindent\textbf{OSSLN with Bounded Height:} While OSSLNu optimizes the total routing cost, it offers no guarantees regarding individual node degrees or the average network degree. Consequently, a single node could theoretically accumulate a near-linear number of edges, rendering a physical network implementation infeasible.

Since the degree of a node is upper-bounded by twice its height, we can control the maximum degree by enforcing a height limit, $h_{\max}$. For the discrete SLN, capping the height at $h_{\max}$ strictly ensures that the maximum node degree does not exceed $2 \times h_{\max}$. 

\begin{definition}[\ossln with Bounded Height $h_{\max}$]
  \label{def:ossln-bounded}
  An \ossln with bounded height $h_{\max}$ $\mathcal{N_H} = (V, h, E)$ is the discrete \sln (\autoref{def:sln}) that minimizes the communication cost given by \autoref{eq:OSSLN-objective-function} such that the height assignment for any of its nodes is at most $h_{\max}$.
\end{definition}

In the \ossln with bounded height, the height assignment is of the form $h: V\longrightarrow [1,h_{\max}]$, where $h_{\max}$ is a parameter, typically bounded by $O(\log{n})$ or $O(1)$.

Consider an interval $[l, r]$, $1\leq l \leq r \leq n$ and a height bound $h\in[1,h_{\max}]$.
Let $nl$, $1\leq nl \leq l$, denote the leftmost node id, such that $\forall v \in [nl, l], h_v \leq h$, and, similarly, let $nr$, $r\leq nr \leq n$, denote the rightmost node id, such that $\forall v \in [r, nr],\, h_v \leq h$.
Finally, let $\ossln(l, r, nl, nr, h, W)$ denote the minimum total cost contributed by all nodes $v \in [l, r]$. 

\begin{theorem}\label{OSSLN:optimum-cost}
If $x \in [l, r]$ is a candidate pivot node with height $h\in [1,h_{\max}]$, then the following recurrence relation holds $\forall (nl,l, x, r,nr)$, $1\leq nl\leq l\leq x\leq r\leq nr\leq n$:
{ \footnotesize
\begin{eqnarray}
\text{OSSLN}(l, r, nl, nr, h, W) = 
\min \Bigg\{\text{OSSLN}(l, r, l, r, h - 1, W), \nonumber\\
    \min_{x \in [l, r]} \Bigg[
    \text{OSSLN}(l, x - 1, nl, nr, h, W) +\nonumber\\
    \text{OSSLN}(x + 1, r, nl, nr, h, W) + \nonumber\\
    \sum_{u < x} \sum_{v \in [x, nr]} W(u, v) + \sum_{u \in [nl, x)} \sum_{v > nr} W(u, v) \Bigg] \Bigg\}.\nonumber
\end{eqnarray}
}
\end{theorem}
\begin{proof}
The proof is by strong induction on the size of the interval $[l,r]\subseteq [1,n]$:

Note that, by definition, $nl = L(x, h)$ and $nr = R(x, h)$ for any  possible $h$. Furthermore, by construction, for an interval induced by $l$, $r$, $nl$, $nr$ with bounded height $h$, assigning a pivot vertex with height equal to $h-1$, we have $nl' = l$ and $nr' = r$.

\noindent\textbf{Base case $(l = r)$:} In this case, we have that $(x = l)$  and the set of paths $x$ belongs to is given by Lemma 1 By setting $nl=L(x, h)$ and $nr=R(x, h)$ in the proof, the vertex contribution to the total cost is given by:
\begin{equation}
  \sum_{u < x} \sum_{v\in[x, nr]} W(u, v) 
  + \sum_{u \in [nl, x)}\sum_{v > nr}W(u, v) \label{eq:ossg-pivot-contribution}
\end{equation}

\noindent\textbf{Induction step:} For $(h = 1)$, we can disregard the first half of the recursion, where we assign the pivot vertex a height lower than $h$. The total contribution of all vertices to the cost becomes:
\begin{equation}
  \sum_{x\in[l, r]} \left( \sum_{u < x} \sum_{v\in[x, nr]} W(u, v) 
  + \sum_{u \in [nl, x)}\sum_{v > nr}W(u, v) \right), \label{eq:ossg-pivot-contribution} \nonumber
\end{equation}
\noindent which, from the base case, is precisely what the second half of the recurrence equation will return for any choice of the pivot vertex.

Let $1\leq l < r\leq n$ and $h\in[1,h_{\max}]$. Assume that the recurrence equation is true for all $h', l', r'$ such that $(h' < h)~\wedge~(r'-l') < (r - l)$. From previous step also assume that the equation is true for $(r'-l') = (r - l)$ with height $(h' < h)$.

Since, when assigning the pivot vertex with height lower than $h$, 
by construction, we have that $(nl = l)$ and $(nr = r$). The minimum cost for the solution for any $nl$ and $nr$ is the minimum between keeping the height of the pivot, $(h_x=h)$, or decreasing it, $h_x\leq(h-1)$, 
which is given by the minimum between:
\begin{eqnarray}
    \min_{x \in [l, r]} \Bigg[
        \ossln(l, x - 1, nl, nr, h, W) + \nonumber\\
     \ossln(x + 1, r, nl, nr, h, W) + \nonumber\\
    \sum_{u < x} \sum_{v \in [x, nr]} W(u, v) + \sum_{u \in [nl, x)} \sum_{v > nr} W(u, v)
    \Bigg] \nonumber
\end{eqnarray}
and $OSSLN(l, r, l, r, h-1, W)$, which is precisely what we want to prove.
\end{proof}

The complexity of the above dynamic programming algorithm is $\mathcal{O}(n^5\cdot h_{\max})$, if we pre-compute an auxiliary matrix of aggregate weights $W'(x,y)=\sum_{z\leq y}W(x,z), \forall (x,y), 1\leq x\leq y\leq n$. 

\section{Continuous Skip-List Networks} \label{sec:continuous-sln}

This section introduces a generalization of \sln called \csln.

\begin{definition}[Continuous Skip List Network (\csln)]
  \label{def:continuous_sln}
  A \csln $\mathcal{N} = (V, h, E)$ is an \sln (\autoref{def:sln}) except the height function $h$ is defined over the real numbers, i.e., $h: V \rightarrow \mathbb{R}$.
\end{definition}

In this section, we show that \csln offer a set of properties similar to classic \sln with arguably simpler proofs.
Next, we detail key procedures of \csln, namely \textbf{join}, \textbf{leave} and \textbf{route}.

\subsection{Procedures}
\SetAlgorithmName{Procedure}{Procedure}{List of Procedures}

We first present the internal variables used in our procedures. In particular, a node $u$ holds the variables:
\begin{itemize}
  \item $u$: the unique id of the node
  \item $h_u$: the height in $\mathbb{R}$ of node $u$
  \item $N_u$: the set of all neighbors of $u$
  \item $D_u$: dictionary mapping a neighbor $w \in N_u$ to its height
  \item $\text{anchor}_u$: the node with an id closest to $u$, either the successor or predecessor of $u$.
\end{itemize}

\paragraph*{Route.}
The route algorithm (\autoref{alg:route}) takes the destination node id $v$ and hops from node to node until it finds the destination node, returning $(v, True)$ or reaches a dead end in some node $w \neq v$, return $(w, False)$.

\begin{algorithm}[t]
\caption{route($v$)}\label{alg:route}
\SetKwInOut{Input}{Input}
\tcp{code running at node $u$}
\If{$u$ = $v$}{
  \Return ($u$, True)
}
\ElseIf{$u < v$}{
  $S \gets \{x \in N_u | u < x \leq v\}$\\
  \If{$S = \emptyset$}{
    \Return ($u$, False)
  }
  \Else{
    $w \gets \max S$ \\
    \Return $w.\text{route}(v)$
  }
}
\Else{
  $S \gets \{x \in N_u | v \le x < u\}$\\
  \If{$S = \emptyset$}{
    \Return ($u$, False)
  }
  \Else{
    $w \gets \min S $\\
    \Return $w.\text{route}(v)$
  }
}
\end{algorithm}

\paragraph*{Join.}
The join algorithm (\autoref{alg:join}) inserts the node inside the Continuous \sln based on its anchor local variable.
The potential neighbor changes of the other nodes propagate through the network with the find\_neighbors algorithm (\autoref{alg:find_neighbors}, in this paper's appendix).

\begin{algorithm}[t]

\caption{join()}\label{alg:join}
\tcp{code running at node $u$}
\If{$\text{anchor} < u$}
{
  $S \gets \{x \in N_{\text{anchor}_u} | \text{anchor} < x\}$\\
}
\Else
{
  $S \gets \{x \in N_{\text{anchor}_u} | x < \text{anchor}\}$\\
}
\If{$S = \emptyset$}{
  $N_u \gets N_u \cup \text{anchor}_u.\text{find\_neighbors}(u, h_v, -\infty)$\\
  Update $D_u$
}
\Else{
  \If{$\text{anchor} < u$}{
    $\text{succ} \gets \min S$\\
  }
  \Else{
    $\text{succ} \gets \max S$\\
  }
  $N_u \gets N_u \cup \text{anchor}_u.\text{find\_neighbors}(u, h_v, -\infty)$\\
  $N_u \gets N_u \cup \text{succ}.\text{find\_neighbors}(u, h_v, -\infty)$\\
  Update $D_u$
}

\end{algorithm}

\paragraph*{Leave.}
The leave algorithm (\autoref{alg:leave}, in this paper's appendix) shows how an already-inserted node can leave the network.
The potential neighbor changes propagate through the network with the delete\_neighbors algorithm (\autoref{alg:delete_neighbors}, in this paper's appendix).

\begin{algorithm}[t]

\caption{leave()}\label{alg:leave}
\tcp{code running at node $u$}
\If{$\text{anchor} < u$}
{
  $S \gets \{x \in N_{\text{anchor}_u} | \text{anchor} < x\}$\\
}
\Else
{
  $S \gets \{x \in N_{\text{anchor}_u} | x < \text{anchor}\}$\\
}
\If{$S = \emptyset$}{
  $N_u \gets N_u \cup \text{anchor}_u.\text{delete\_node}(u, h_v, -\infty)$\\
  Update $D_u$
}
\Else{
  \If{$\text{anchor} < u$}{
    $\text{succ} \gets \min S$\\
  }
  \Else{
    $\text{succ} \gets \max S$\\
  }
  $N_u \gets N_u \cup \text{anchor}_u.\text{delete\_node}(u, h_v, -\infty)$\\
  $N_u \gets N_u \cup \text{succ}.\text{delete\_node}(u, h_v, -\infty)$\\
  Update $D_u$
}

\end{algorithm}

\subsection{\uniform \csln}
We now present a simple \csln called \uniform whose heights are chosen uniformly at random in $[0, 1]$, and show that \uniform has the performances of state-of-the-art overlay network algorithms as both its routing search path and maximum degree are logarithmic with high probability.
The properties of \uniform will be used to derive our main result on \laslin in the next section.



\begin{protocol}[\uniform]
  \uniform is a \csln whose initialization procedure sets its heights to a real number in $[0, 1]$ chosen uniformly at random.
\end{protocol}


\begin{theorem}[Expected Routing Path Length]
  \label{thm:expected-path-length}
  For \uniform, the routing path between any two nodes has length smaller than $2 \ln n$ on expectation.
\end{theorem}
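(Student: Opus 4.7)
The plan is to decompose the routing path from $u$ to $v$ (WLOG $u<v$) into an \emph{ascent} and a \emph{descent} around the node $m$ of maximum height in $[u,v]$, identify the nodes visited in each phase with the left-to-right (resp.\ right-to-left) maxima of a subsequence of the height vector, and apply the classical expectation $H_L = \sum_{i=1}^L 1/i$ for the number of such maxima in an i.i.d.\ random sequence of length $L$.

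First I would prove a structural lemma: the routing from $u$ to $v$ must visit $m$. From any intermediate node $x \in [u,m)$, a jump to some $y \in (m, v]$ is impossible, since $m$ lies strictly between $x$ and $y$ with $h_m > \max(h_x, h_y)$, violating the edge condition of Definition~\ref{def:sln}. Combined with the observation that $\{x,y\}$ is an edge iff $x$ and $y$ hold the two largest heights in $[x,y]$ (a direct consequence of the definition), a short case analysis of the greedy step of Definition~\ref{def:routing-path} yields: during the ascending phase $y_{i+1}$ is the first position in $(y_i, v]$ whose height exceeds $h_{y_i}$, so the ascent nodes are exactly the left-to-right maxima of $h_u, h_{u+1}, \dots, h_v$ (ending at $m$); during the descending phase (after $y_j = m$), $y_{i+1}$ is the maximum-height position in $(y_i, v]$, so the descent nodes after $m$ are exactly the right-to-left maxima of $h_{m+1}, \dots, h_v$ (ending at $v$). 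Verifying this characterization carefully at the transition through $m$ is the main technical step.

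Finally I would compute the expectation. Since the heights are i.i.d.\ and continuous, their ranks in any window form a uniformly random permutation, so the expected number of left-to-right maxima in a random sequence of length $L$ equals $H_L$. Writing $N = v-u+1$, this gives an expected ascent length of $H_N - 1$. Conditioning on $m$, which is uniformly distributed over $[u,v]$, and using the identity $\sum_{k=0}^{N-1} H_k = N\, H_{N-1} - (N-1)$, the expected descent length equals $H_{N-1} - \frac{N-1}{N}$. Summing and applying $H_{N-1} = H_N - 1/N$ yields $\mathbb{E}[d_{\mathcal{N}}(u,v)] = 2H_N - 2$. Since $H_N - 1 = \sum_{k=2}^{N} 1/k < \ln N \le \ln n$ for $N \ge 2$ (and the statement is vacuous for $N = 1$), we conclude $\mathbb{E}[d_{\mathcal{N}}(u,v)] < 2 \ln n$, as claimed.
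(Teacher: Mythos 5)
Your proposal is correct and follows essentially the same route as the paper: decompose the greedy route into an ascending and a descending phase, identify the nodes visited in each phase with the record values (left-to-right, resp.\ right-to-left maxima) of the i.i.d.\ height sequence, and bound the expected number of records by a harmonic sum $H_N < 1+\ln n$ for each phase (the paper's appendix proof works with exactly these indicator variables $1_w$ for ``$w$ lies on the increasing path,'' as is visible in the proof of Theorem~\ref{thm:path-length-whp}). Your conditioning on the position of the maximum $m$ to get the exact value $2H_N-2$ is a slightly sharper accounting than the paper's symmetry argument, but it is the same underlying idea.
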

\begin{proof}
  Let $u$ and $v$ be two nodes.
  We note $X$ the random variable equal to the routing path length between $u$ and $v$.
  We note $X_{\text{up}}$ the number of height increases along the path, likewise $X_{\text{down}}$ is the number of path decreases.
  It holds that $X = X_{\text{up}} + X_{\text{down}}$.
  We will give an upper bound of $X_{\text{up}}$, the same upper bound will hold for $X_{\text{down}}$ by symmetry.
  Let $w$ be a node between $u$ and $v$ in the \sln, we note $1_{w}$ the indicator random variable equal to $1$ if (1) $w$ is part of the path between $u$ and $v$, and (2) if the height increases when hoping to $w$.
  It holds $X_{\text{up}} = \sum_{w \in [u+1, v]} 1_{w}$.\\
  We now compute $P[1_{w} = 1]$ for any $w \in [u+1, v]$.
  $w$ is part of the increasing path if and only if the height of $w$ is strictly greater than every other heights in $[u, w-1]$.
  The probability that two uniform random variables are equal is zero, hence there exists exactly one node with maximum height in $[u, w]$.
  The heights are independent and uniformly distributed, hence $w$ is the node with maximum height with probability $1 / (|w - u| + 1)$:
  \begin{align*}
    \forall w \in [u+1, v], \quad P[1_w = 1] = \frac{1}{|w - u| + 1}
  \end{align*}

  \noindent
  Noticing that $\mathbb{E}[1_w] = P[1_w = 1]$ and gathering the two previous equalities gives:
  \begin{align*}
    \mathbb{E}[X_{\text{up}}] &= \sum_{w \in [u+1, v]} \frac{1}{|w - u| + 1} = \sum_{i=2}^{|v-u|+1} \frac{1}{i} \le \ln n
  \end{align*}
  The last inequality holds as the harmonic sum until $n$, $1 + 1/2 + \dots + 1/n$, is lower than $1 + \ln n$.
  The claimed upper bound on the expected path length follows as the same reasoning can be done for $X_{\text{down}}$.
\end{proof}

\begin{theorem}[Routing Path Length with High Probability]
  \label{thm:path-length-whp}
  For \uniform, the routing path between any two nodes has length smaller than $2e \cdot \ln n$ with probability greater than $1 - \frac{2}{n}$.
\end{theorem}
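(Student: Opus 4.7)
The plan is to refine the analysis behind Theorem~\ref{thm:expected-path-length} by expressing the routing path length as a sum of independent Bernoullis, and then applying a Chernoff bound. Fix two nodes $u < v$ (the case $v < u$ is symmetric) and let $p \in [u, v]$ be the node of maximum height in the interval, which is unique almost surely because heights are drawn from a continuous distribution. The greedy routing of Definition~\ref{def:routing-path} traces an ``ascending-then-descending'' path in terms of heights: from $u$ it visits exactly the left-to-right maxima of $h_u, h_{u+1}, \ldots, h_p$, and from $p$ it visits exactly the right-to-left maxima of $h_p, h_{p+1}, \ldots, h_v$. Consequently, the path length equals $A + B$, where
\begin{align*}
A &= \bigl|\{w \in (u, v] : h_w > h_x \text{ for all } x \in [u, w)\}\bigr|, \\
B &= \bigl|\{w \in [u, v) : h_w > h_x \text{ for all } x \in (w, v]\}\bigr|.
\end{align*}

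Next, by the classical fact that in an i.i.d.\ continuous sequence the record indicators are mutually independent, the variables $Y_w := \mathbf{1}[h_w > h_x,\, \forall x \in [u, w)]$ for $w \in (u, v]$ are independent Bernoullis with $E[Y_w] = 1/(w - u + 1)$. Hence $A$ is a sum of independent Bernoullis with mean $\mu_A := E[A] = H_{v-u+1} - 1 \leq \ln n$, and the analogous statement holds for $B$. Applying the multiplicative Chernoff bound $P[X \geq (1+\delta)\mu] \leq (e^\delta/(1+\delta)^{1+\delta})^\mu$ with $(1+\delta)\mu_A = e\ln n$, and checking that the resulting expression, as a function of $\mu_A \in (0, \ln n]$, is maximized at $\mu_A = \ln n$, one obtains $P[A \geq e \ln n] \leq 1/n$. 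The same bound applies to $B$, and a union bound yields $P[d(u,v) \geq 2e \ln n] \leq 2/n$, which is equivalent to the claim.

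The main obstacle is establishing the path characterization $d(u,v) = A + B$ rigorously from Definition~\ref{def:routing-path}. I would prove it by induction on the size of the interval $[u, v]$, exploiting that the farthest non-overshooting right-neighbor of the current node $x_i$ is either (i) the next left-to-right maximum of $h_u, \ldots, h_?$ (while $x_i$ is still on the ascending segment toward the pivot $p$), or (ii) the next right-to-left maximum of the remaining descending subinterval (once $p$ has been reached); in both cases the edge condition of Definition~\ref{def:sln} follows directly from the maximality property. Once this combinatorial identity is in place, the remainder of the proof reduces to the independence of records and a textbook Chernoff bound.
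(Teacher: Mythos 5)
Your proposal is correct and follows essentially the same route as the paper's proof: decompose the greedy route into an ascending and a descending phase, identify the visited nodes with left-to-right (resp.\ right-to-left) records of the i.i.d.\ continuous heights, invoke independence of record indicators, apply the multiplicative Chernoff bound at threshold $e\ln n$, and finish with a union bound over the two phases. If anything you are slightly more careful than the paper, which plugs in $\mu=\ln n$ directly, whereas you correctly note that $\mu_A=H_{v-u+1}-1\le\ln n$ and that the Chernoff expression is maximized at $\mu=\ln n$ on that range.
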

\begin{proof}
  Let $u$ and $v$ be two nodes such that $u < v$, and let $X_{\text{up}}$ be the random variable equal to the increasing routing path length between $u$ and $v$ .
  Let $w \in [u, v]$, we note $1_{w}$ the indicator random variable equal to $1$ if $w$ is part of the increasing path between $u$ and $v$ (see proof of Theorem 3).

  We first show that all $1_w$ are independent random variables.
  Let $w$ and $w ^{\prime}$ be two nodes such that $u < w < w ^{\prime}$, it holds:
  \begin{align*}
    P[1_{w ^{\prime}} = 1 \mid 1_w = 1] &= \\ P[h_x < h_{w ^{\prime}} \forall x \in [w+1, w ^{\prime} - 1] \cap \max_{y \in [u, w]} h_y < h_w ^{\prime}]
                                        &= \\ P[h_y < h_w ^{\prime} \forall y \in [u, w ^{\prime} - 1]]
                                        &= \\ P[1_{w ^{\prime}} = 1]
  \end{align*}

  Now that we showed independence, we use the following Chernoff's bound: For $Y$ the sum of independent Bernoulli random variables (a Poisson trial) and $\mu = \mathbb{E}[Y]$, $\forall \delta > 0$ it holds that:
  \begin{align*}
    P(Y \ge (1+\delta) \mu) < \left(\frac{e^{\delta}}{(1+\delta)^{1+\delta}}\right)^{\mu}
  \end{align*}
  Replacing $Y = X_{\text{up}} = \sum_{w \in [u+1, v]} 1_w$, $\mu = \ln n$ and $\delta = e - 1$ gives:
  \begin{align*}
    P(X_{\text{up}} \ge e \cdot \ln n) &< \frac{e^{\delta \cdot \ln n}}{(1+\delta)^{(1+\delta) \cdot \ln n}}\\
                                       &= \frac{n^{\delta}}{n^{(1+\delta) \cdot \ln (1+\delta)}} = \frac{1}{n}
  \end{align*}
  By symmetry, the same result holds for $X_{\text{down}}$.
  Finally, we obtain the desired result with a union bound:
  \begin{align*}
    P(X \ge 2e \cdot \ln n) &\le P(X_{\text{up}} \ge e \cdot \ln n \cup X_{\text{down}} \ge e \cdot \ln n)\\
                            &\le P(X_{\text{up}} \ge e \cdot \ln n) + P(X_{\text{down}} \ge e \cdot \ln n)\\
                            &< \frac{2}{n}\qedhere
  \end{align*}
\end{proof}

\begin{theorem}[Expected Degree]
  \label{thm:expected-degree}
  For \uniform, the expected degree of any node is lower than $4$.
\end{theorem}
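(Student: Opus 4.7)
The plan is to separate the degree of a node $u$ into right neighbors (those with id greater than $u$) and left neighbors, bound the expected number of right neighbors by a quantity strictly less than $2$, and invoke the symmetry of the model to obtain the same bound for left neighbors, yielding $4$ overall. For $j \geq 1$, I let $\mathbf{1}_j$ be the indicator that the $j$-th node to the right of $u$, which I denote $u+j$, is a neighbor of $u$. The expected number of right neighbors is then $\sum_j \mathbb{E}[\mathbf{1}_j] = \sum_j P(\mathbf{1}_j = 1)$, so the problem reduces to computing $P(\mathbf{1}_j = 1)$.

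The central step is to characterize the event $\{\mathbf{1}_j = 1\}$ purely in terms of the relative order of the $j+1$ heights $h_u, h_{u+1}, \dots, h_{u+j}$. By Definition~\ref{def:sln}, $u+j$ is a neighbor of $u$ iff no $x$ strictly between $u$ and $u+j$ satisfies $h_x \geq \min(h_u, h_{u+j})$. I will split into two sub-cases, using that equalities among heights occur with probability zero under a continuous distribution. If $h_{u+j} < h_u$, the condition forces $h_{u+j}$ to exceed every intermediate height; combined with $h_{u+j} < h_u$ this says that $h_u$ has rank $1$ and $h_{u+j}$ has rank $2$ among the $j+1$ heights. If $h_{u+j} > h_u$, the condition forces $h_u$ to exceed every intermediate height, so $h_{u+j}$ has rank $1$ and $h_u$ has rank $2$. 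Hence $\mathbf{1}_j = 1$ iff the top two ranks among these $j+1$ heights are held by indices $u$ and $u+j$ in some order.

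Since the heights are i.i.d.\ uniform on $(0,1)$, exchangeability makes every permutation of their ranks equally likely, giving
\begin{align*}
P(\mathbf{1}_j = 1) = \frac{2}{j(j+1)}.
\end{align*}
The tail sum $\sum_{j \geq 1} 2/(j(j+1))$ telescopes to $2$, so truncating at $j = n-u$ yields a bound strictly smaller than $2$ on the expected number of right neighbors. By the symmetry of the setup, the identical argument bounds the expected number of left neighbors by a quantity strictly less than $2$, and so the total expected degree is strictly less than $4$.

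The main obstacle is the two-case unpacking of the neighborhood condition in order to translate a topological statement about intermediate nodes into a purely combinatorial statement about ranks of $j+1$ i.i.d.~variables; once that translation is in place, exchangeability and a telescoping sum finish the proof with no additional work.
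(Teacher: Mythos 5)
Your proof is correct: the characterization of $\{\mathbf{1}_j=1\}$ as the event that $h_u$ and $h_{u+j}$ occupy the top two ranks among the $j+1$ i.i.d.\ heights is exactly right (and vacuously gives probability $1$ for $j=1$), exchangeability gives $P(\mathbf{1}_j=1)=\tfrac{2}{j(j+1)}$, and the telescoping sum bounds each side by $2$. This is essentially the same rank-based indicator argument the paper uses to obtain the constant $4$, so there is nothing further to add.
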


\begin{proof}
  If $u$ and $w$ are two nodes, we define $N_u$ the random variable equal to the number of neighbors of $u$, and $1_{u, w}$ the indicator random variable equal to one if $u$ and $w$ are neighbors.
  It holds $N_u = \sum_{w \neq u} 1_{u, w}$.
  Two nodes $u$ and $w$ with $u < w$ are neighbors if and only if $u$ and $w$ are the first and second highest nodes in $[u, w]$, hence it holds that $\mathbb{E}[1_{u, w}] = \frac{2}{|w-u| \cdot (|w - u| + 1)}$.
  It therefore holds:
  \begin{align*}
    \mathbb{E}[N_u] &=  \sum_{w \in [1, n] \setminus \{u\}} \frac{2}{|w-u| \cdot (|w - u| + 1)} 
                    \\ &\le  4 \cdot \sum_{i=1}^{n/2} \frac{1}{i \cdot (i+1)} = 4 \cdot \sum_{i=1}^{n/2} \frac{1}{i} - \frac{1}{i+1}\\
                    &= 4 \cdot (1 - \frac{1}{n/2+1}) \le 4\qedhere
  \end{align*}
\end{proof}

\begin{theorem}[Maximum Degree with High Probability]
  \label{thm:degree-whp}
  For \uniform, the maximum degree is lower than $4 \ln n$ with probability greater than $1 - \frac{1}{n^{3}}$.
\end{theorem}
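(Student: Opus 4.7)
The plan is to bound the degree of each node $u$ via a reduction to a sum of (essentially) independent Bernoullis, apply a Chernoff tail bound in the spirit of the proof of Theorem~\ref{thm:path-length-whp}, and then take a union bound over left/right sides and over all $n$ nodes.

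First, I would characterize the right-neighbors of a fixed node $u$. By Definition~\ref{def:sln}, a node $v>u$ is a right-neighbor iff no $x\in(u,v)$ satisfies $h_x\geq \min(h_u,h_v)$. A case analysis on the sign of $h_v-h_u$ gives a clean picture: if $h_v<h_u$, then $v$ is a right-neighbor iff $h_v>h_x$ for all $x\in(u,v)$, i.e., $v$ is a left-to-right running maximum of $h_{u+1},\ldots,h_v$; and if $h_v\geq h_u$, then $v$ is a right-neighbor iff $v$ is the first index past $u$ with height at least $h_u$ (and any such index is automatically a running maximum). Hence the right-neighbors of $u$ are exactly the running maxima of $(h_{u+1},h_{u+2},\ldots)$ up to and including the first index $v_1$ with $h_{v_1}\geq h_u$. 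In particular, the right-degree $R_u$ is at most the total number of left-to-right records of the full right-suffix $(h_{u+1},\ldots,h_n)$.

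Next, I would invoke the classical fact that for an i.i.d.\ continuous sequence the record indicators $\xi_k=\mathbb{1}[h_{u+k}>\max(h_{u+1},\ldots,h_{u+k-1})]$ are mutually independent Bernoullis with $P(\xi_k=1)=1/k$. Therefore $R_u$ is stochastically dominated by $\sum_{k=1}^{n-u}\xi_k$, a Poisson trial of mean at most $H_n<\ln n+1$. Applying the same Chernoff bound as in the proof of Theorem~\ref{thm:path-length-whp}, with a suitable deviation $\delta$, yields $P(R_u>c\ln n)<1/n^4$ for an appropriate constant $c$. A symmetric argument (reading the sequence right-to-left) gives the same bound for the left-degree $L_u$. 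A union bound over the two sides and then over the $n$ nodes controls $\max_u(L_u+R_u)$ with failure probability at most $1/n^3$, as required.

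The main obstacle is the tuning of constants: because the mean of the dominating sum of Bernoullis is already $\Theta(\ln n)$, a naive Chernoff choice only yields a bound of the form $c\ln n$ rather than exactly $\ln n$. A sharper route I would pursue is a direct distributional computation: representing the successive right-records $V_1<V_2<\cdots$ as a Markov chain with $1-V_{i+1}\mid V_i\sim(1-V_i)\cdot U(0,1)$, writing $R_u=\min\{k:V_k\geq h_u\}$, and integrating out $h_u\sim U(0,1)$ gives the closed-form tail $P(R_u>k)=2^{-k}$. Plugging this directly into the union bound (rather than passing through Chernoff on the dominating record count) sidesteps the slack in the Bernoulli-sum step and delivers the high-probability bound with a much cleaner constant.
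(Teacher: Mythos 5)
Your structural characterization is correct and is the heart of any proof of this theorem: the right-neighbors of $u$ are exactly the left-to-right records of $h_{u+1},h_{u+2},\dots$ up to and including the first value that reaches $h_u$, so the right-degree is dominated by the record count of an i.i.d.\ continuous sequence, which by R\'enyi's theorem is a sum of independent Bernoulli$(1/k)$ indicators with mean $H_n<\ln n+1$. Chernoff (exactly as in the proof of Theorem~\ref{thm:path-length-whp}) plus a union bound over the two sides and the $n$ nodes then yields a maximum degree of $c\ln n$ with probability $1-O(n^{-3})$ for an explicit constant $c$ around $5$. This is the substance of the theorem and matches the intended argument in spirit.

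Two corrections, however. First, the ``obstacle'' you flag is not a tuning issue that a sharper tail can fix: the constant $1$ in the statement is unattainable by any argument. The globally highest node links to every record on its left and every record on its right, so its degree alone concentrates around $\ln(j-1)+\ln(n-j)\approx 2\ln n$ for a typical position $j$ of the maximum; the maximum degree therefore exceeds $\ln n$ with probability tending to $1$, and the theorem can only be read as an $O(\ln n)$ claim. Second, your proposed sharper route does not deliver what you claim. The computation $P(R_u>k)=2^{-k}$ is correct (it is $E[\prod_{i=1}^k U_i]$ after integrating out $h_u$, and it neatly recovers the expected-degree bound of Theorem~\ref{thm:expected-degree} since $E[R_u]=2$), but to push $2^{-k}$ below $n^{-4}$ for the union bound you need $k\ge 4\ln n/\ln 2\approx 5.77\ln n$, which is no better than---in fact slightly worse than---the tuned Chernoff threshold of roughly $5\ln n$. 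So the final paragraph of your proposal overclaims; the rest stands as a correct proof of the $O(\log n)$ bound with the stated failure probability.
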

\begin{proof}
  Let $u$, $w$ and $w ^{\prime}$ be three different nodes.
  We name $1_{u, w}$ and $1_{u, w ^{\prime}}$ the indicator random variables equal to $1$ if $u$ and $w$ are neighbors, and if $u$ and $w ^{\prime}$ are neighbors, respectively.
  We first show that $1_{u, w}$ and $1_{u, w ^{\prime}}$ are independent random variables: the proof is almost identical to the one of Theorem 4.

  We first establish that each node taken individually has a degree lower than $\ln n$ with probability greater than $1 - \frac{1}{n^4}$.
  Let $N_u$ be the random variable equal to the number of neighbors of $u$, we use the Chernoff bound for Poisson trials used in the proof of \autoref{thm:path-length-whp}:
  \begin{align*}
    P[N_u \ge  4 \ln n] &< \left(\frac{e^{(\ln n) - 1}}{(\ln n)^{\ln n}}\right)^{4} = \frac{(n / e)^{4}}{n^{4 \ln \ln n}} < \frac{1}{n^4}
  \end{align*}
  This last equality holds when $n$ is greater than a given constant, which is not a problem as we are interested in cases when $n$ is large.
  We conclude with a union bound: the probability that at least one node has a degree greater than $4 \ln n$ is upper bounded by $\sum_{v \in V} P[N_v \ge  4 \ln n] \le \frac{1}{n^3}$ and the claim holds.
\end{proof}

\section{Learning-Augmented \csln}
\label{sec:laslin}

In this section, we present \laslin, a \emph{learning-augmented} \csln algorithm, where each node can obtain a prediction of the future demand matrix at any time.

We first show that, for general demands, any demand-oblivious overlay algorithm can be off by a factor of $O(\log n / \log \log n)$ compared to an optimum static \sln.
This result provides a comparison point around which the consistency and robustness metrics of demand-aware algorithms can be evaluated.

\begin{theorem}
  \label{thm:lb_non_la}
  We consider any overlay algorithm that is not learning-augmented (\autoref{def:learning-augmented}).
  Then, there exists a demand matrix $W$ such that the considered algorithm has a cost of $\Omega(\log n / \log \log n) \cdot |W|$ in expectation while an optimal static \sln algorithm pays $|W|$, where $|W|$ is the total sum of the elements of $W$.
\end{theorem}
\begin{proof}
  As the considered algorithms has a logarithmic maximum degree with high probability, there exists a constant $c$ such that the generated network has a maximum degree lower than $O (\ln^c n)$ with probability greater than $1 - O(1/n)$.
  Let $u$ be a node.
  We first cover the case when the maximum degree is lower than $\ln^c n$, all the considered probabilities and expected values are therefore conditional.
  In this case, we consider the disk centered around $u$ with smallest radius such that it contains no less than half of the nodes.
  The maximum distance between $u$ and a node in that disk is no lower than $O(\log n / \log \log n)$ as the network has a polylogarithmic maximum degree --- this result can be easily obtained by upper bounding all the degrees by $O(\log^c n)$.
  There exists another node $v$ such that $v$ is at the border or outside of the disc with probability at least $1/2$.
  We consider the demand matrix that is zero everywhere except at the coefficient with row $u$ and column $v$.
  An optimal \sln algorithm can clearly pay a cost of $1$, the considered \sln therefore has an approximation ratio worse of $\omega(\log n / \log \log n)$ for that demand matrix.\\
  Finally, we cover the cases where the maximum degree is higher than $\ln^c n$: That case only contributes with an additive constant to the distance between $u$ and $v$ as a distance is lower than $n$.
  We considered all the cases and the claim follows.
\end{proof}

\begin{protocol}[\laslin]
  \laslin is a learning-augmented \csln that takes an integer in $[1, \lceil\ln n\rceil]$ as a prediction.
  Its initialization procedure set its height by adding the input prediction with a real number in $]0, 1[$ chosen uniformly at random.
\end{protocol}

\begin{theorem}
  \label{thm:log_square_robustness}
  \laslin is a \csln that is $O(1)$-consistent and $O(\log^2 n)$-robust, while its maximum degree is $O(\log^2 n)$ with high probability.
\end{theorem}
\begin{proof}
  We start by proving $O(1)$-consistency.
  Adding values in $]0, 1[$ to the node's heights does not remove the edges of the initial \sln.
  Let $u$ and $v$ be two nodes and consider the routing path $P_{u,v} = u, x_1, x_2 \dots v$ from $u$ to $v$ in the initial \sln.
  Now calling $P_{u,v}'$ the path from $u$ to $v$ in the newly created \csln, we have that $P_{u, v}' \subseteq P_{u, v}$; this can be show with a simple induction on the path length starting from $u$.
  All paths of the \csln have equal or shorted length than in the initial \sln with perfect predictions and $O(1)$-consistency follows.\\
  We now prove the robustness claim.
  A routing path between any two nodes $u$ and $v$ has two phases: first ascending (the heights always increase) and then descending (the heights always decrease).
  For both phase in the initial \sln, the number of different encountered integer heights is upper-bounded by $O(\log n)$.
  For each encountered integer height, the expected routing path length in the \csln is upper-bounded by $O(\log n)$ (Theorem~\ref{thm:expected-path-length}) with high probability, we deduce that the routing path is lower than $O(\log^2 n)$ with high probability using a union bound on the integer heights of the initial \sln.
  We now use this upper bound of each distance in the total weighted path lengths: $\sum_{u, v \in V} W(u, v) \cdot d(u, v) \le O(\log^2 n) \cdot |W|$ and the optimal static \sln pays at least $|W|$.

  Finally, we show the claim on the maximum degree.
  Let $u$ be a node.
  Fixing an integer $i \in [0, p_u]$, in the newly created \csln $u$ has fewer than $\ln n$ neighbors with initial integer heights $i$ with high probability (Theorem~\ref{thm:degree-whp}).
  We complete the proof with a union bound on all possible $O(\log n)$ heights of the initial \sln.
\end{proof}


\begin{figure}[t]
\centering	
\includegraphics[width=.8\columnwidth]{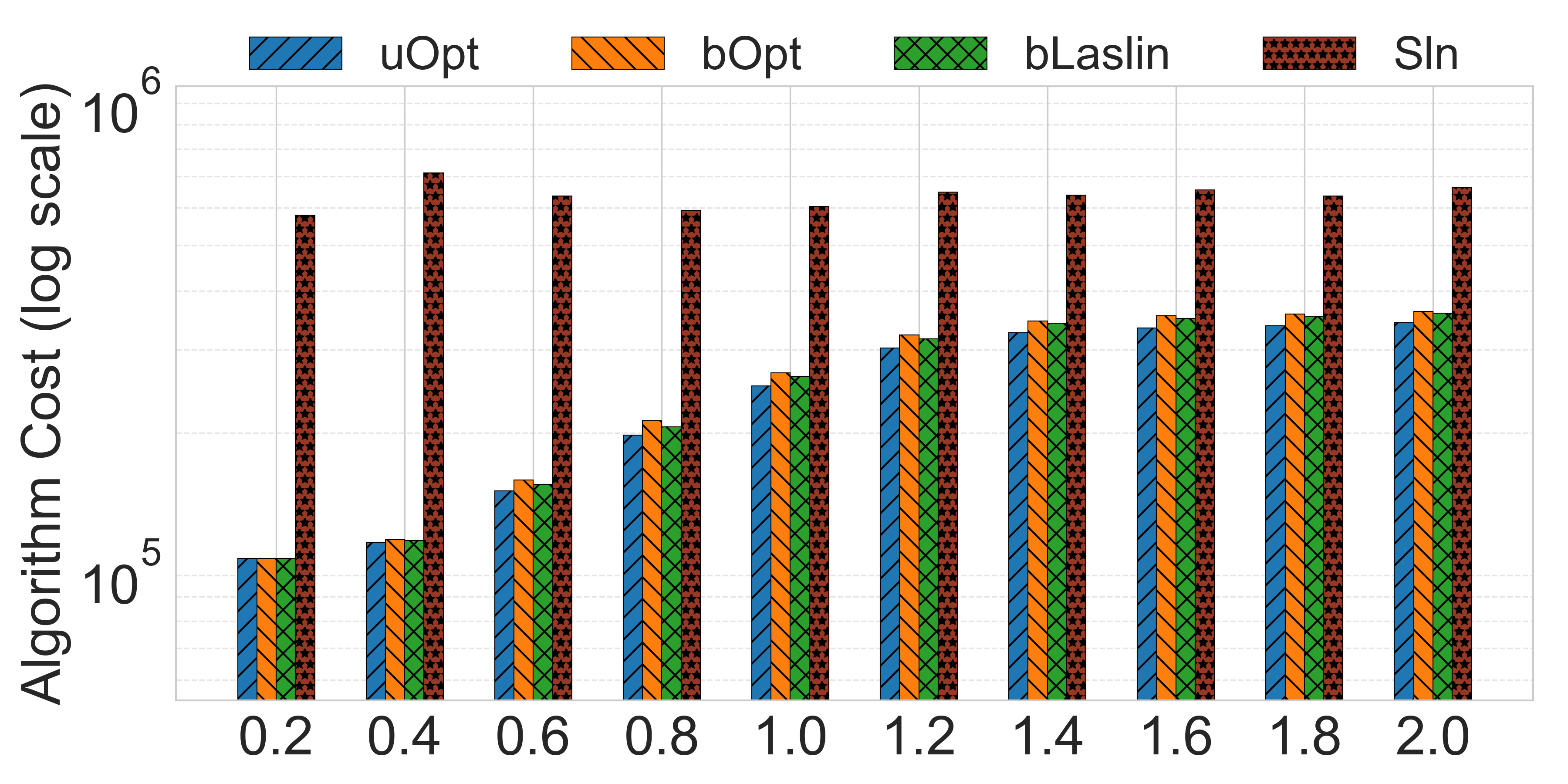}
\caption{\textbf{Consistency} (Pareto $\alpha\in[0.2, 2]$, $n=64$, $\eta=0$).}
\label{fig:result-pareto-consistency}
\end{figure}

\begin{figure*}[t]
    \centering
    \begin{subfigure}{1\linewidth}
        \centering
        \includegraphics[width=0.35\linewidth]{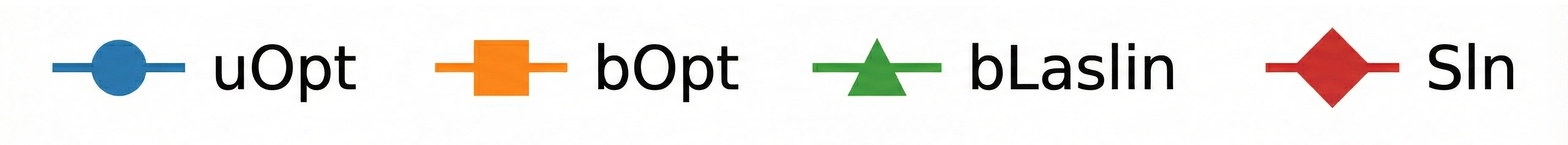}
        \vspace{-2mm} 
    \end{subfigure}
    
    \begin{subfigure}{0.19\linewidth}
        \centering
        \includegraphics[width=1\linewidth]{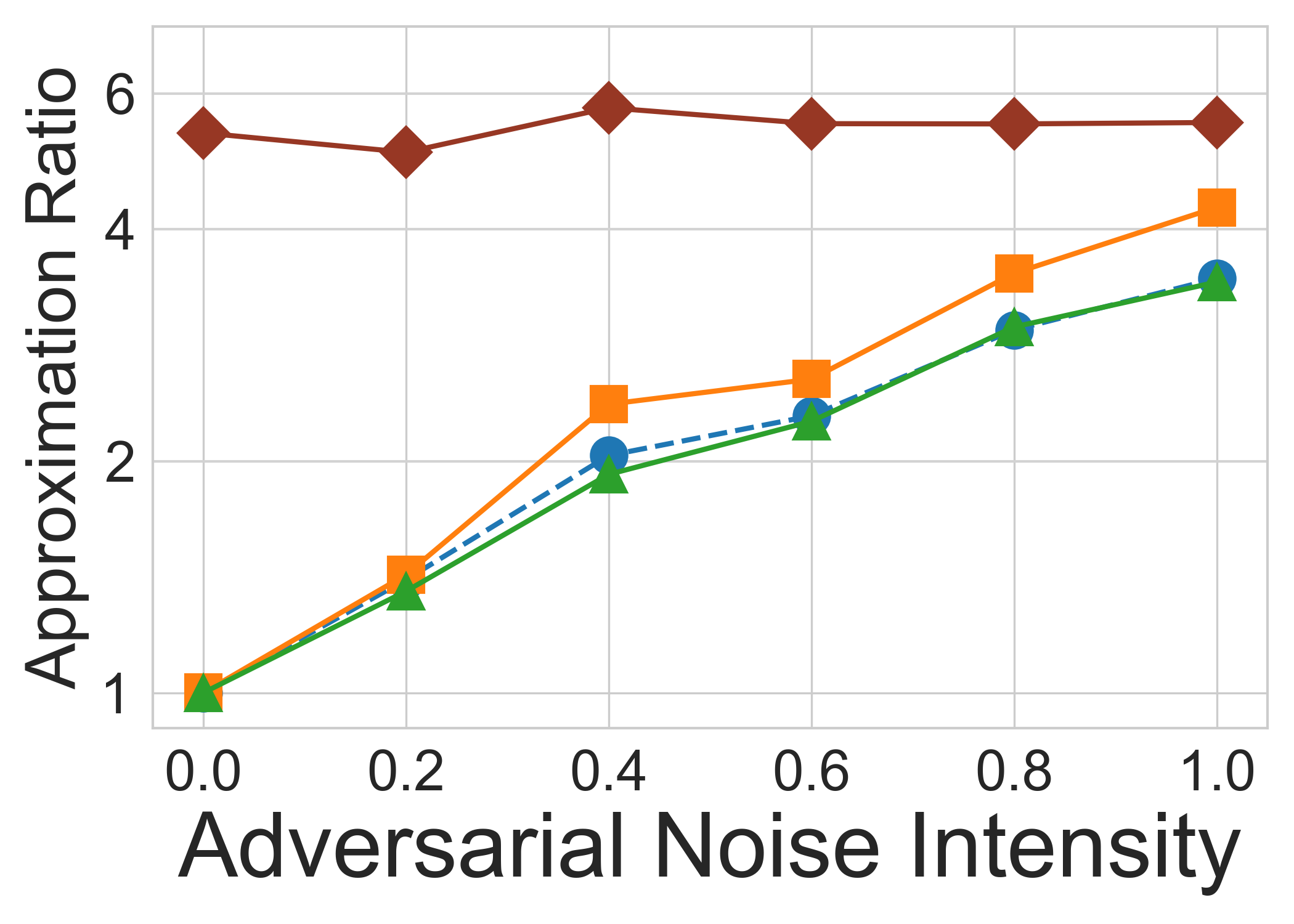}
        \caption{$\alpha = 0.2$}
    \end{subfigure}\hfill
    \begin{subfigure}{0.19\linewidth}
        \centering
        \includegraphics[width=1\linewidth]{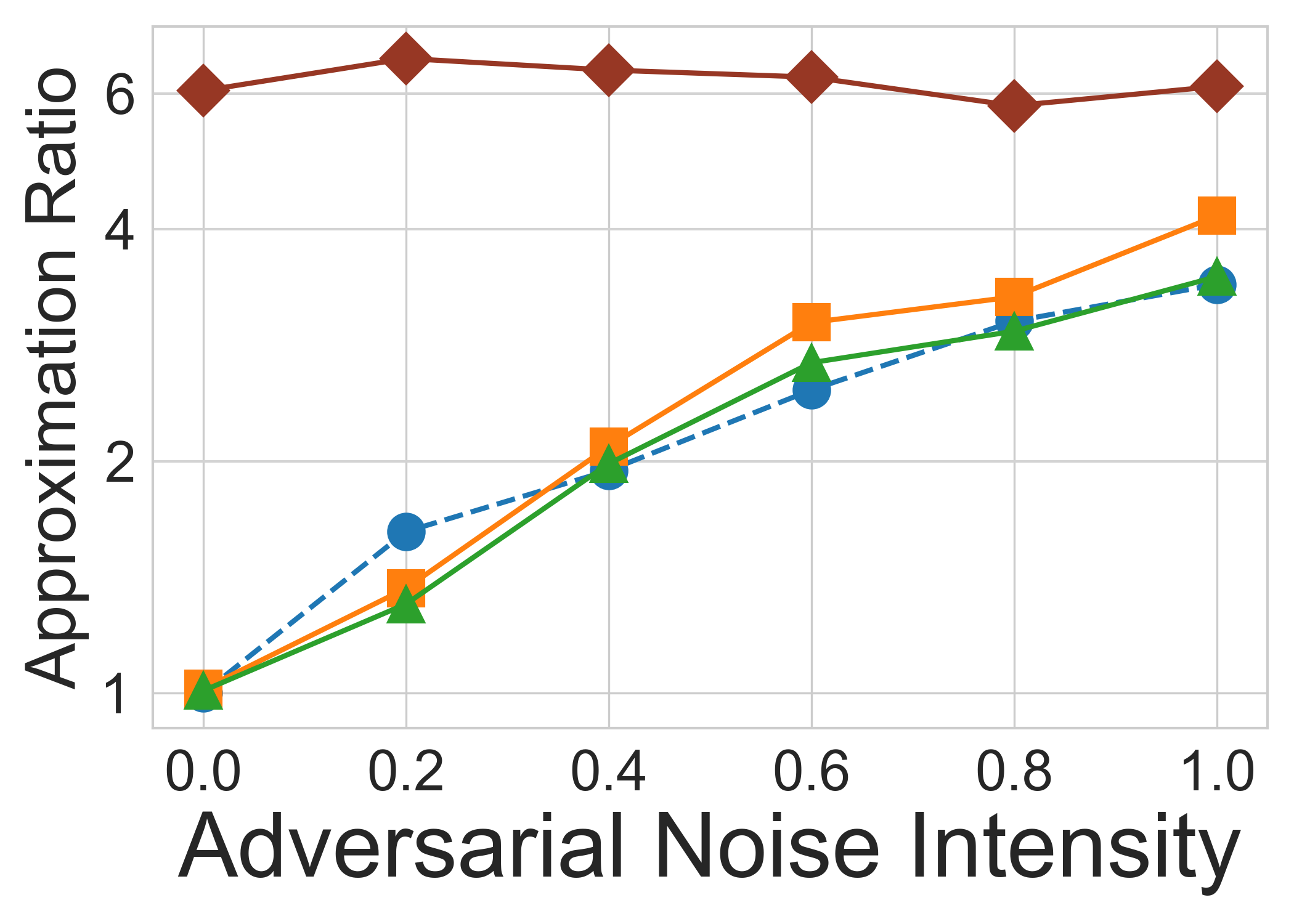}
        \caption{$\alpha = 0.4$}
    \end{subfigure}\hfill
    \begin{subfigure}{0.19\linewidth}
        \centering
        \includegraphics[width=1\linewidth]{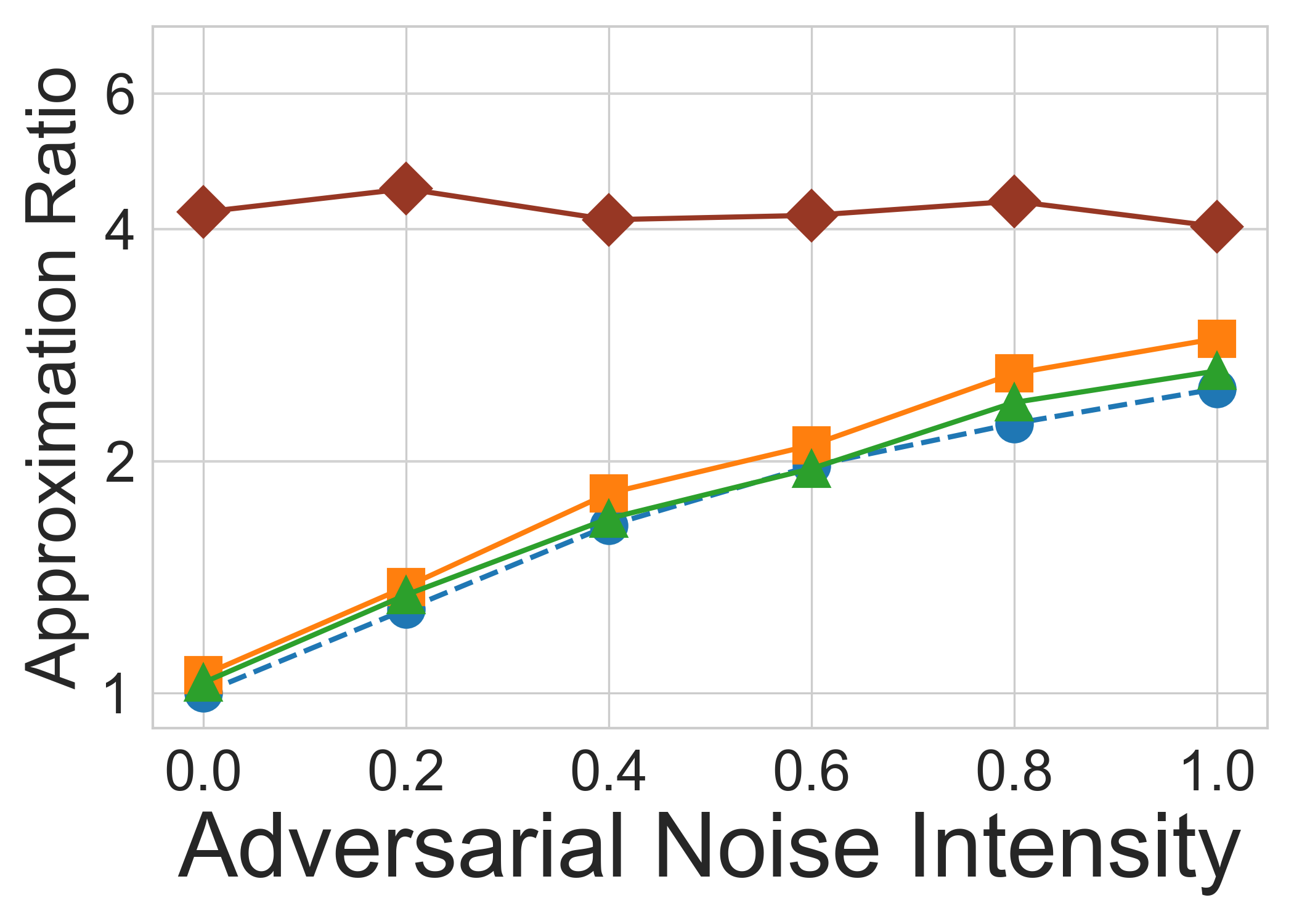}
        \caption{$\alpha = 0.6$}
    \end{subfigure}\hfill
    \begin{subfigure}{0.19\linewidth}
        \centering
        \includegraphics[width=1\linewidth]{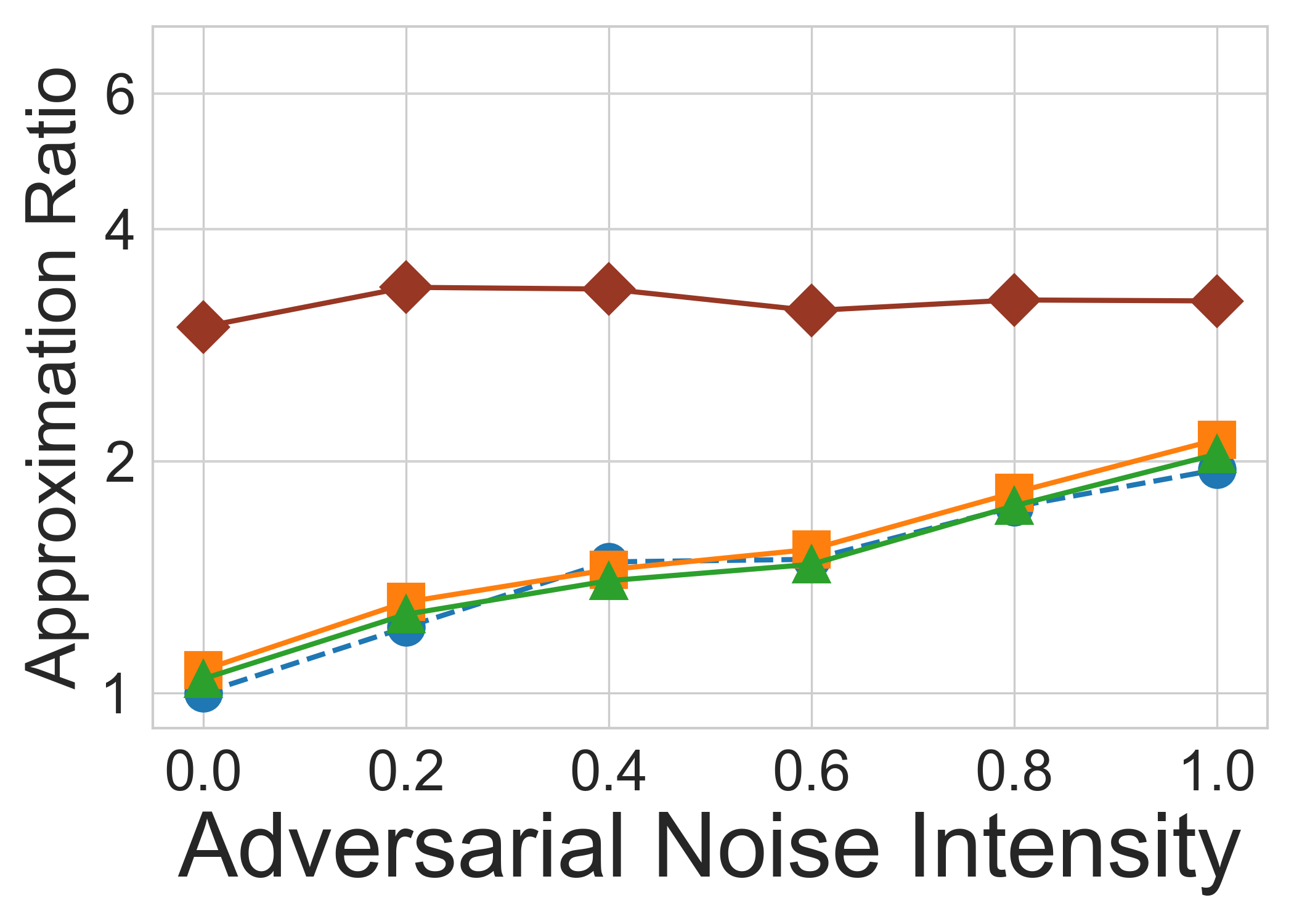}
        \caption{$\alpha = 0.8$}
    \end{subfigure}\hfill
    \begin{subfigure}{0.19\linewidth}
        \centering
        \includegraphics[width=1\linewidth]{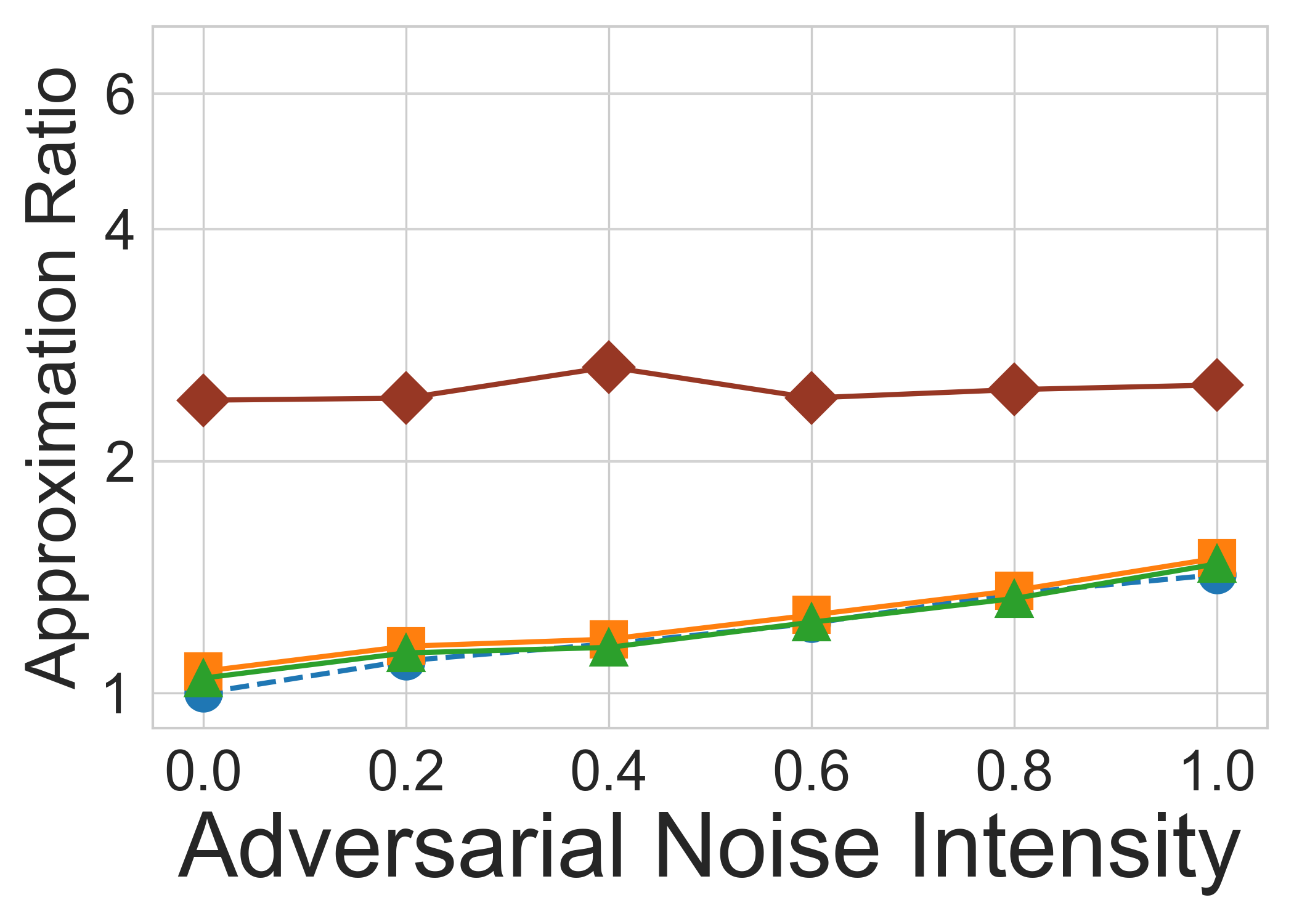}
        \caption{$\alpha = 1.0$}
    \end{subfigure}
    
    \vspace{2mm} 

    \begin{subfigure}{0.19\linewidth}
        \centering
        \includegraphics[width=1\linewidth]{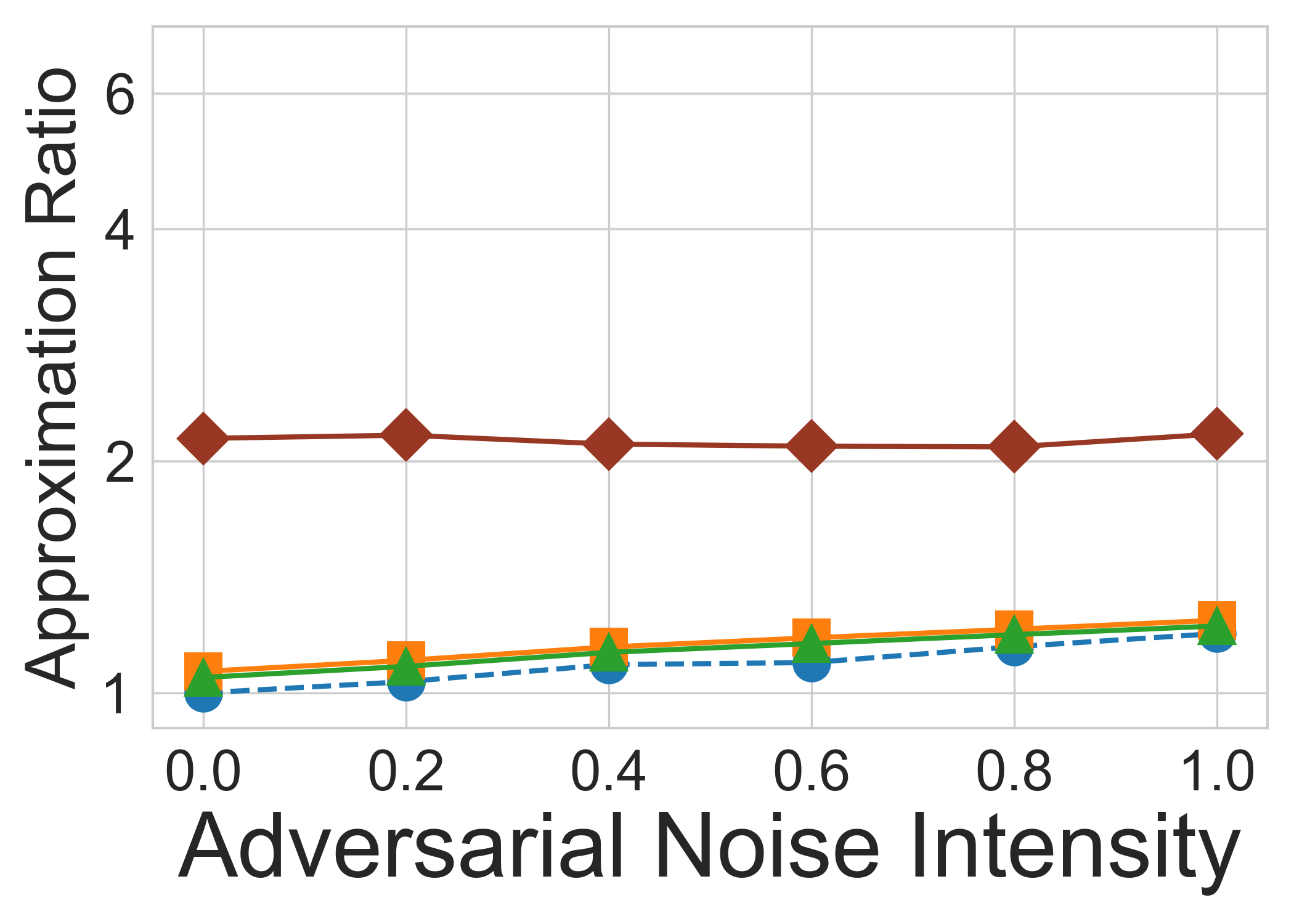}
        \caption{$\alpha = 1.2$}
    \end{subfigure}\hfill
    \begin{subfigure}{0.19\linewidth}
        \centering
        \includegraphics[width=1\linewidth]{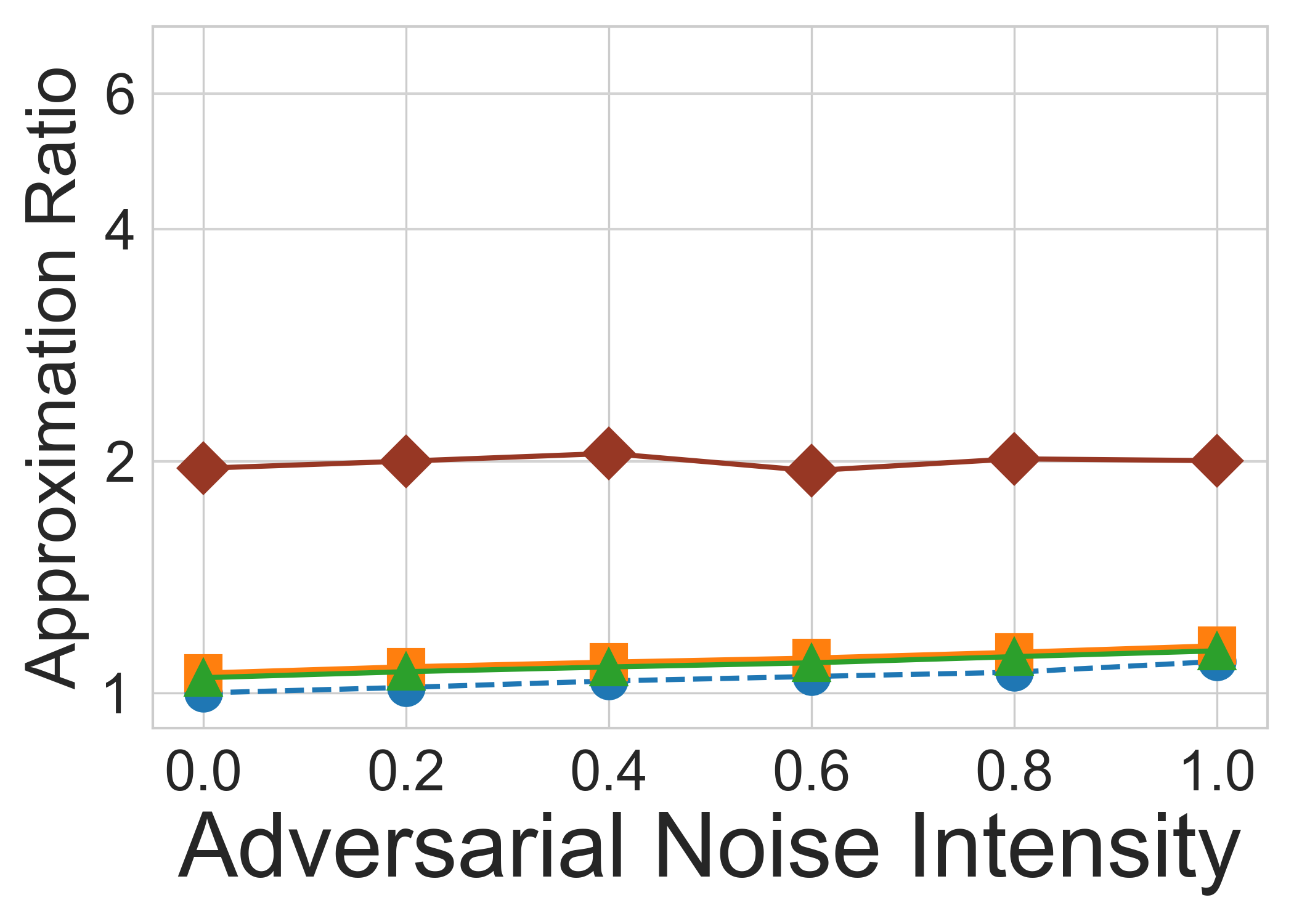}
        \caption{$\alpha = 1.4$}
    \end{subfigure}\hfill
    \begin{subfigure}{0.19\linewidth}
        \centering
        \includegraphics[width=1\linewidth]{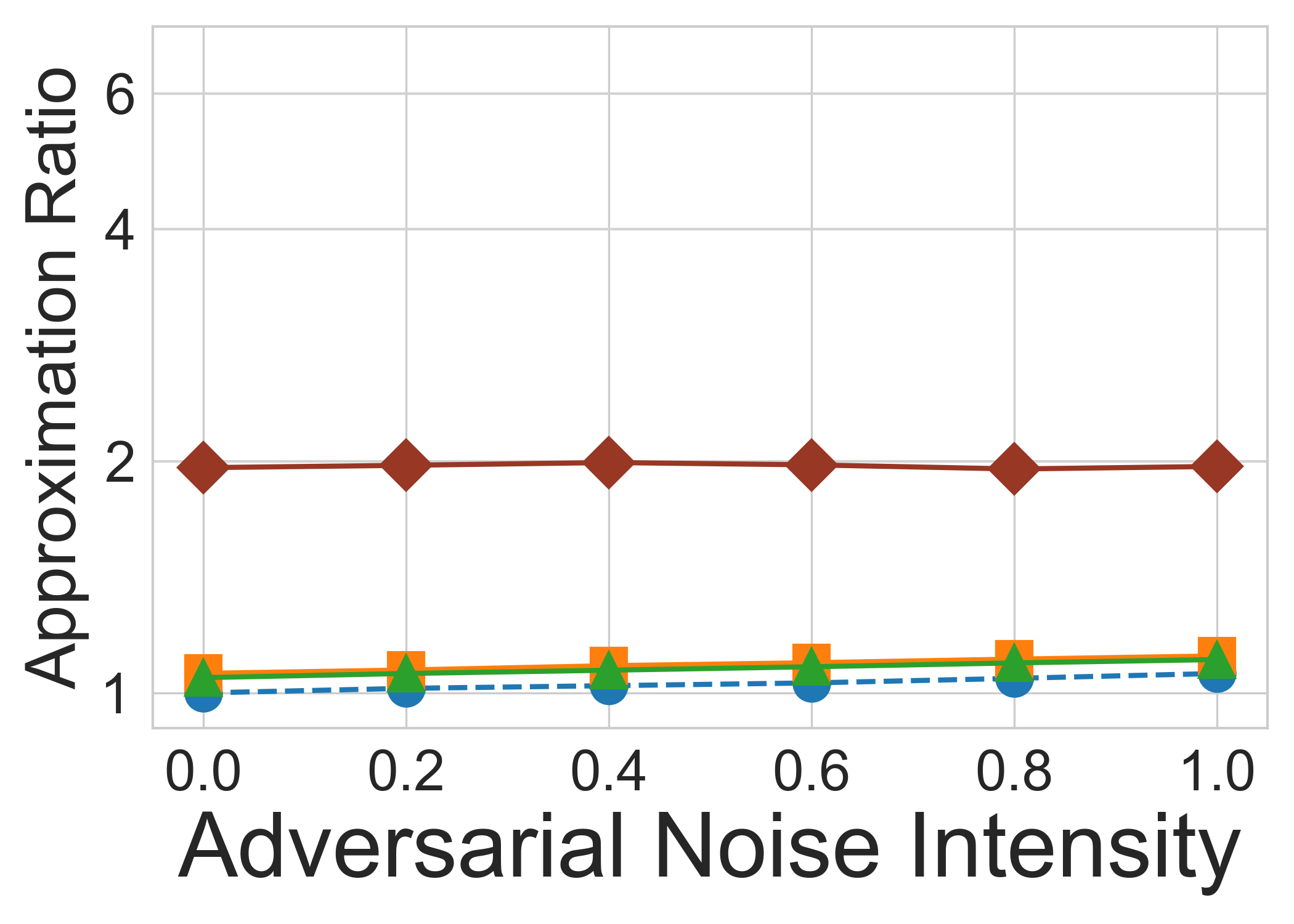} 
        \caption{$\alpha = 1.6$}
    \end{subfigure}\hfill
    \begin{subfigure}{0.19\linewidth}
        \centering
        \includegraphics[width=1\linewidth]{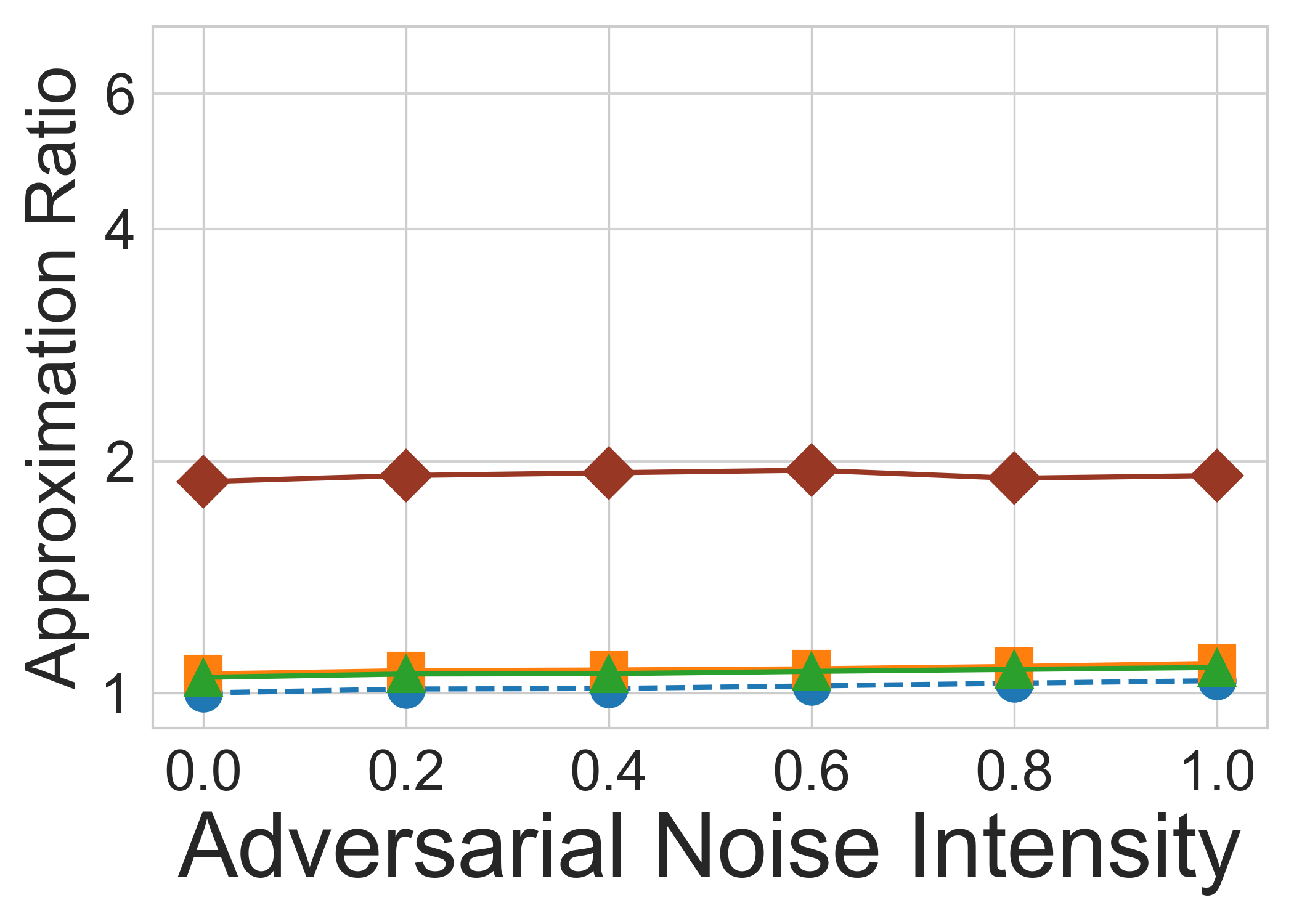}
        \caption{$\alpha = 1.8$}
    \end{subfigure}\hfill
    \begin{subfigure}{0.19\linewidth}
        \centering
        \includegraphics[width=1\linewidth]{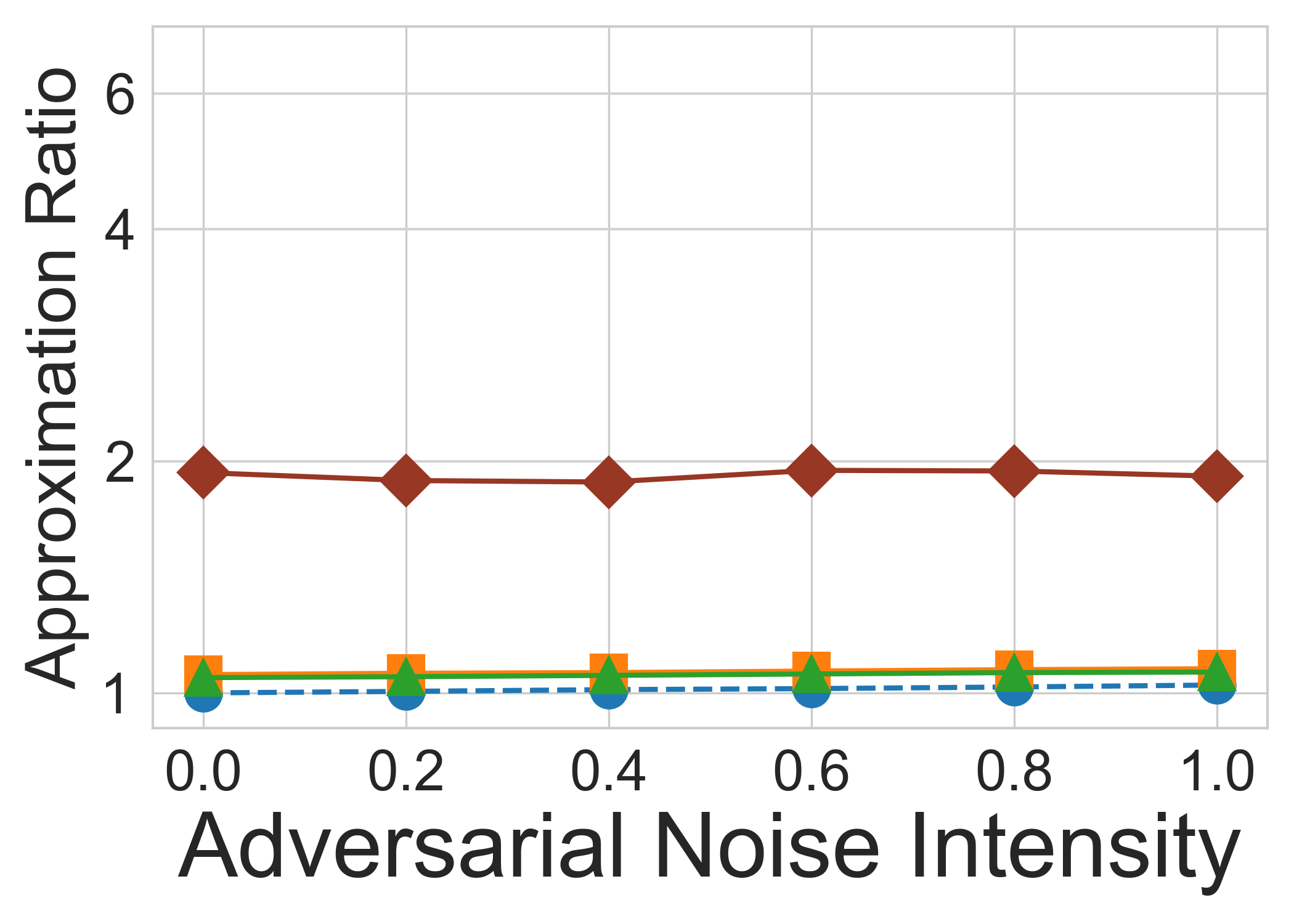}
        \caption{$\alpha = 2.0$}
    \end{subfigure}
    
\caption{\textbf{Smoothness against Stale Noise} (Pareto demand, $n=64$): average communication cost relative to uOPT.}
\label{fig:algs-stale-costs}
\vspace{-3mm}
\end{figure*}

\begin{figure*}[t]
    \centering
    \begin{subfigure}{1\linewidth}
        \centering
        \includegraphics[width=0.35\linewidth]{plots/legend.png}
        \vspace{-2mm} 
    \end{subfigure}
    
    \begin{subfigure}{0.19\linewidth}
        \centering
        \includegraphics[width=1\linewidth]{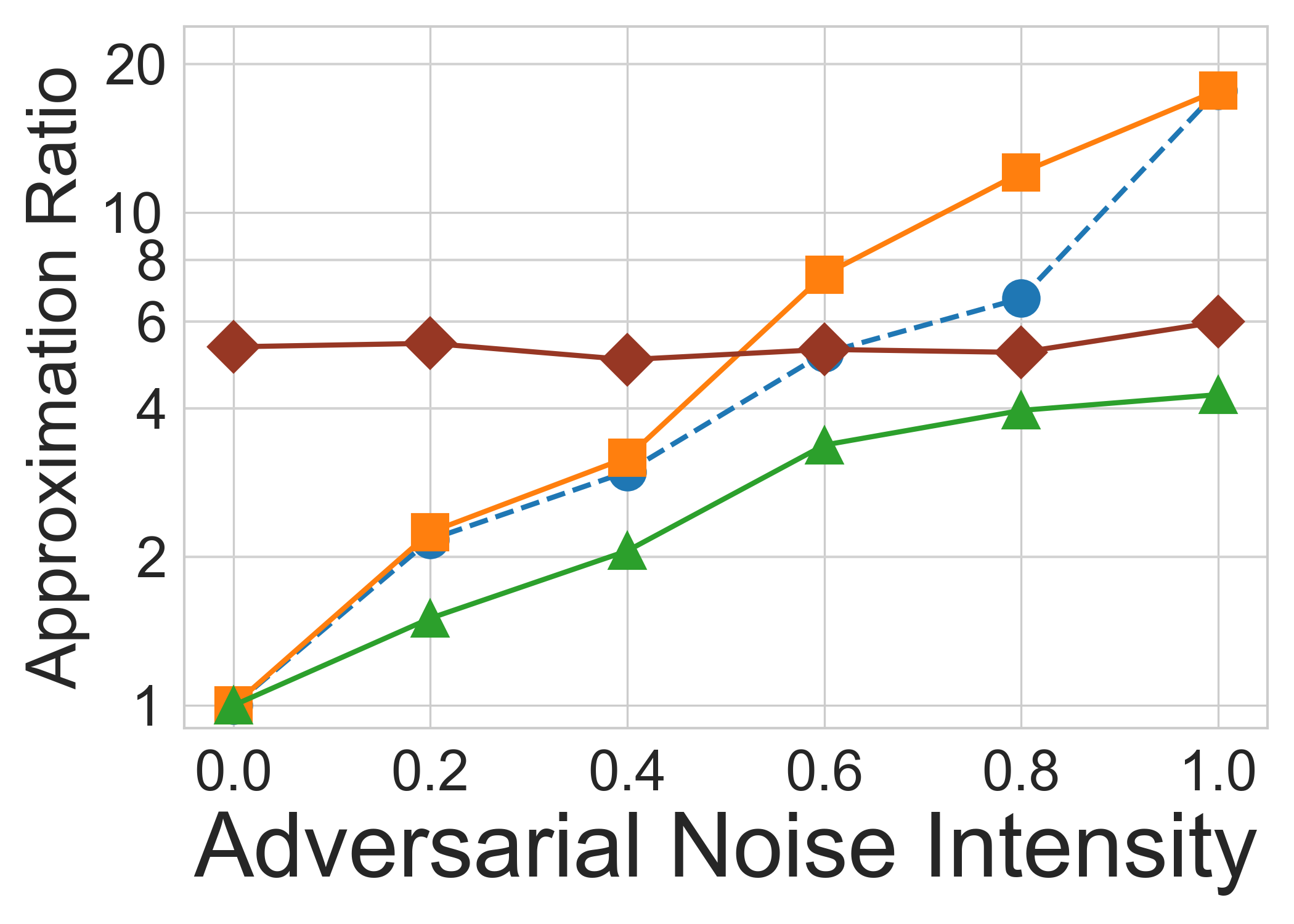}
        \caption{$\alpha = 0.2$}
    \end{subfigure}\hfill
    \begin{subfigure}{0.19\linewidth}
        \centering
        \includegraphics[width=1\linewidth]{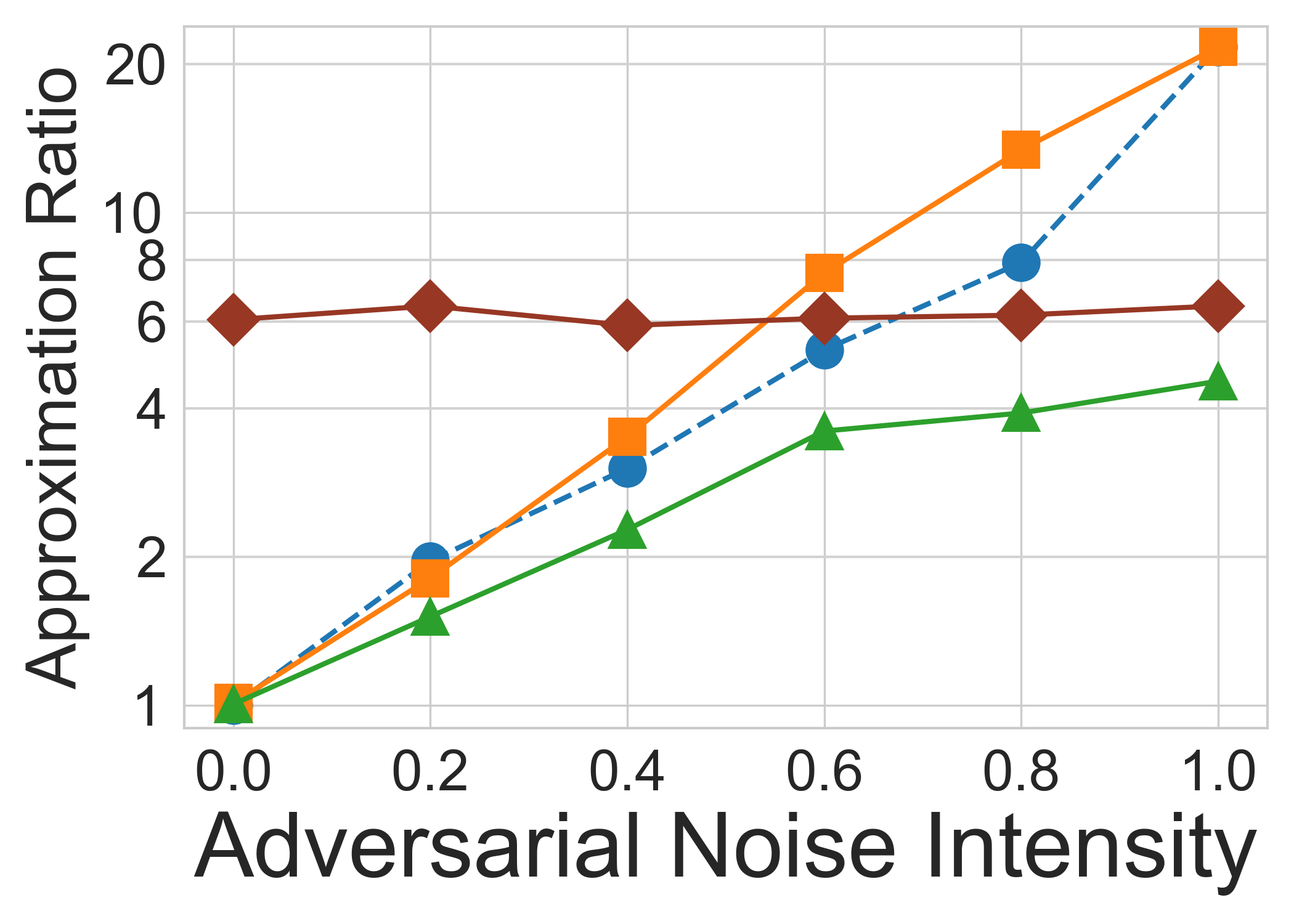}
        \caption{$\alpha = 0.4$}
    \end{subfigure}\hfill
    \begin{subfigure}{0.19\linewidth}
        \centering
        \includegraphics[width=1\linewidth]{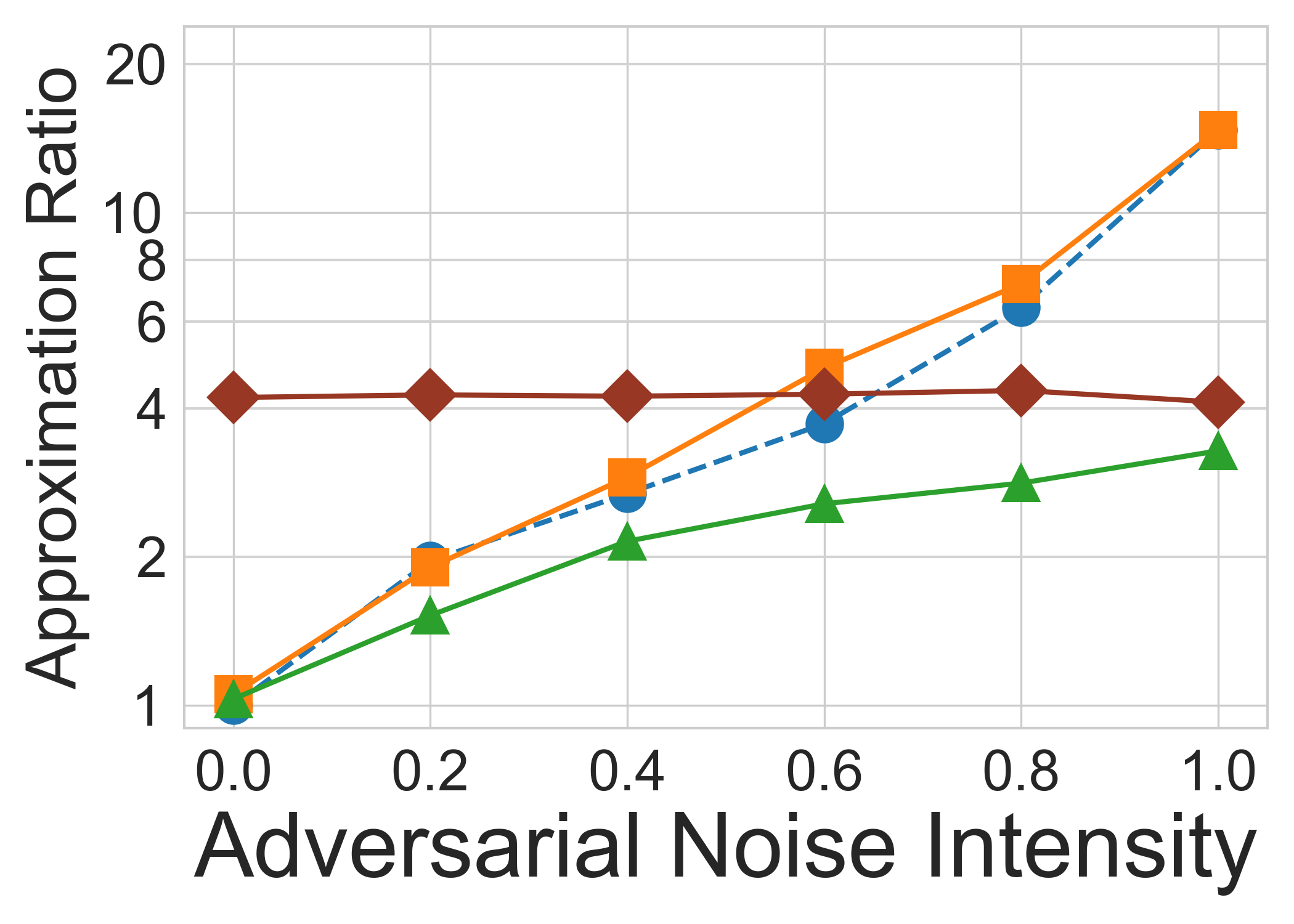}
        \caption{$\alpha = 0.6$}
    \end{subfigure}\hfill
    \begin{subfigure}{0.19\linewidth}
        \centering
        \includegraphics[width=1\linewidth]{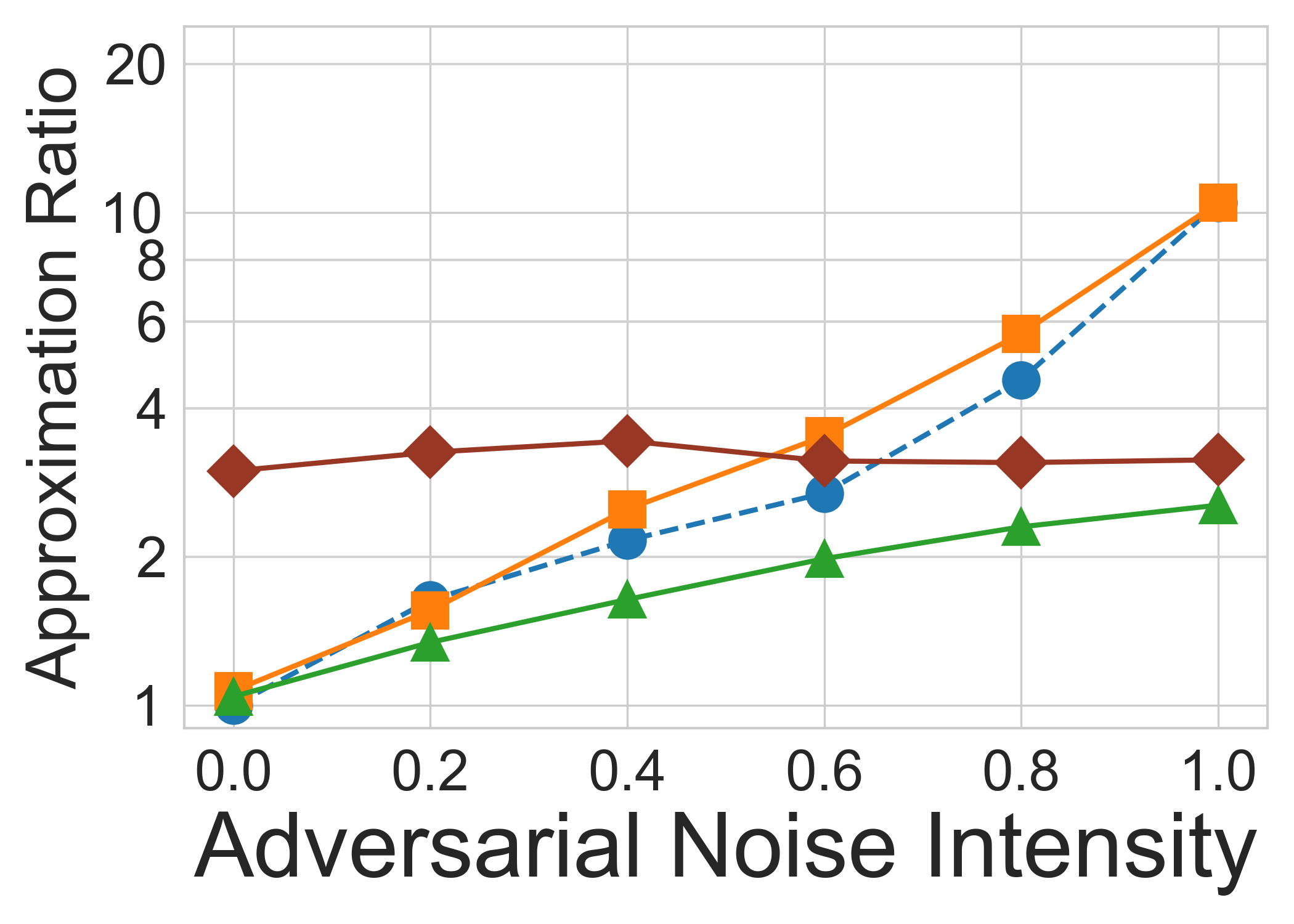}
        \caption{$\alpha = 0.8$}
    \end{subfigure}\hfill
    \begin{subfigure}{0.19\linewidth}
        \centering
        \includegraphics[width=1\linewidth]{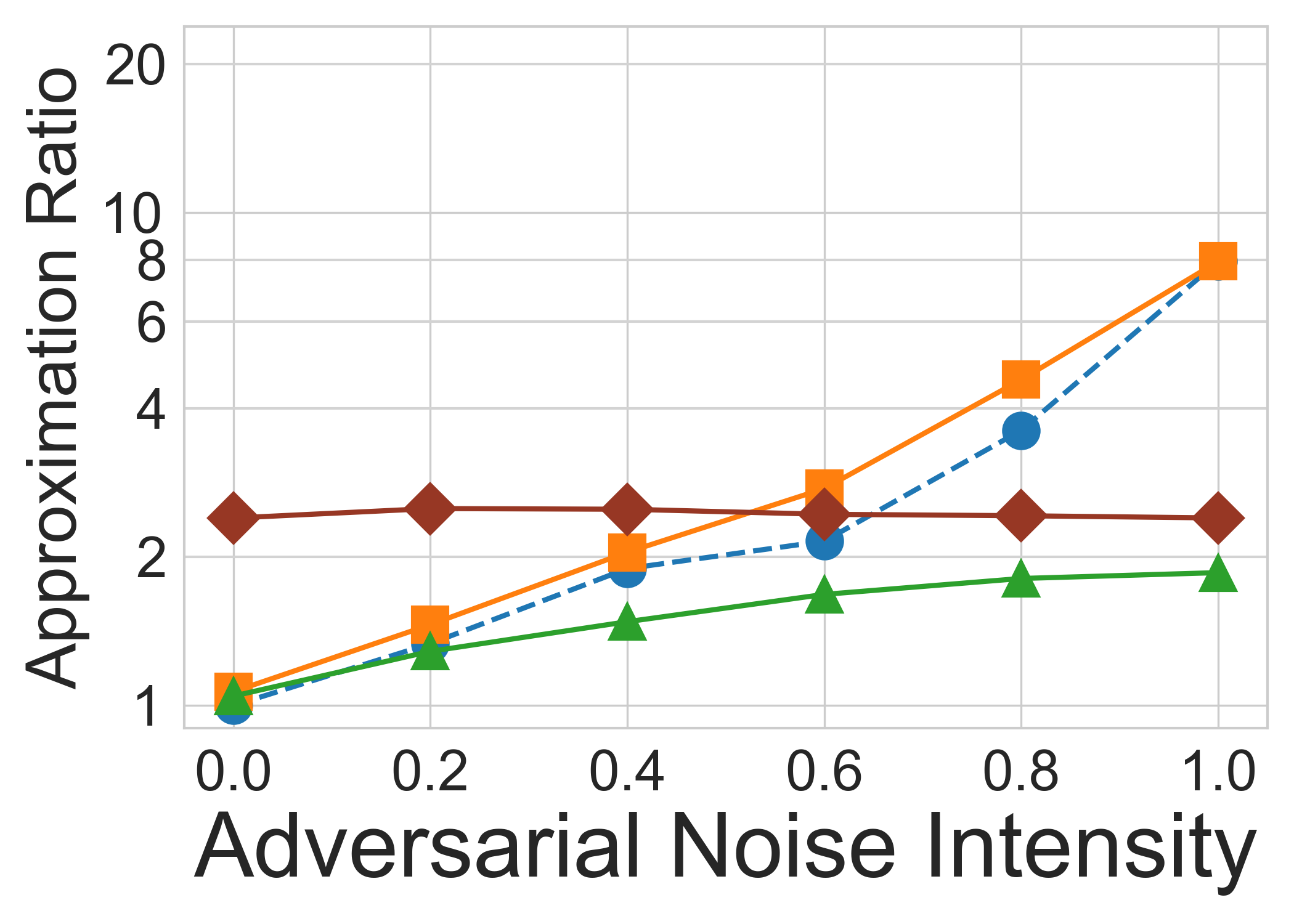}
        \caption{$\alpha = 1.0$}
    \end{subfigure}
    
    \vspace{2mm} 

    \begin{subfigure}{0.19\linewidth}
        \centering
        \includegraphics[width=1\linewidth]{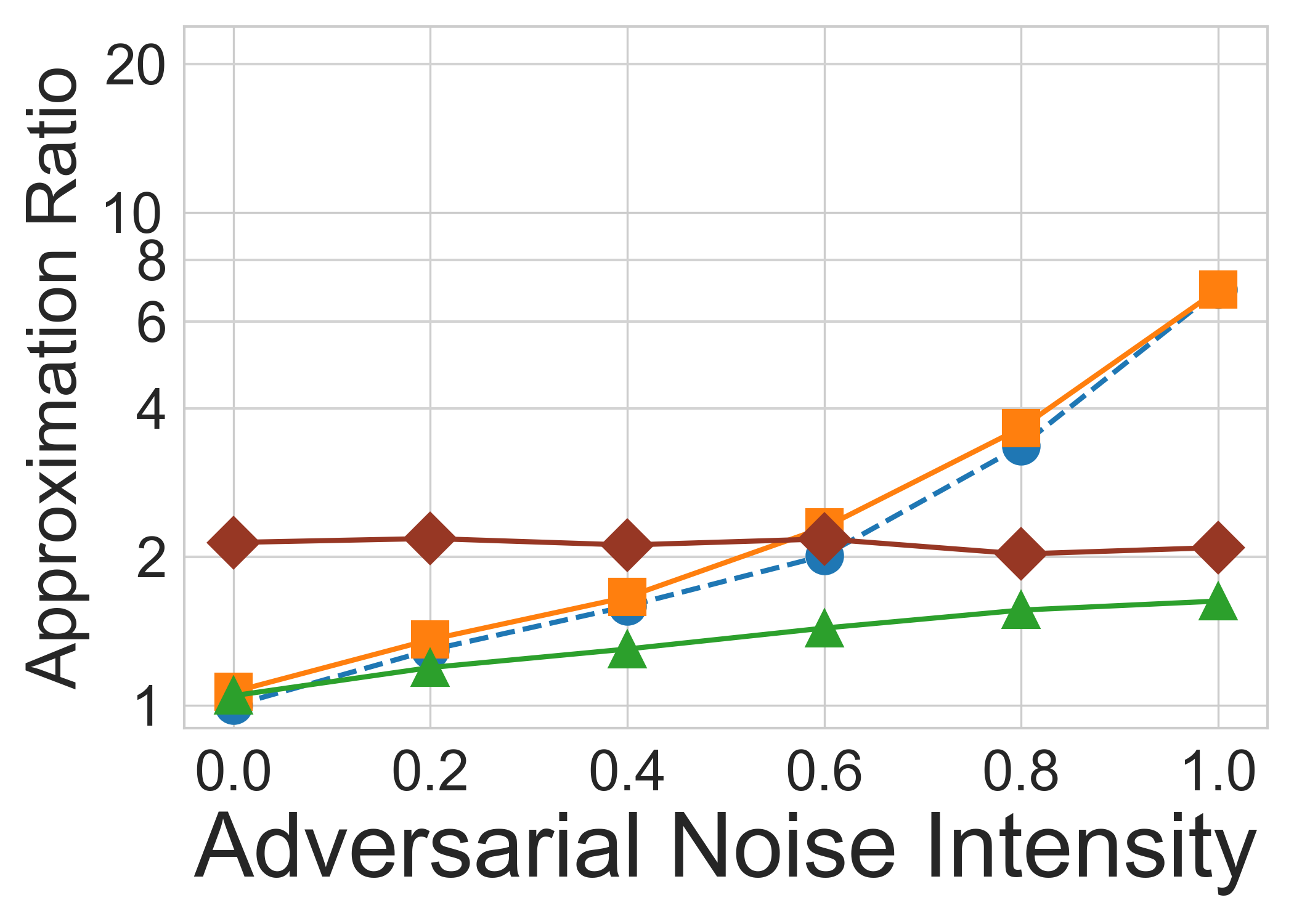}
        \caption{$\alpha = 1.2$}
    \end{subfigure}\hfill
    \begin{subfigure}{0.19\linewidth}
        \centering
        \includegraphics[width=1\linewidth]{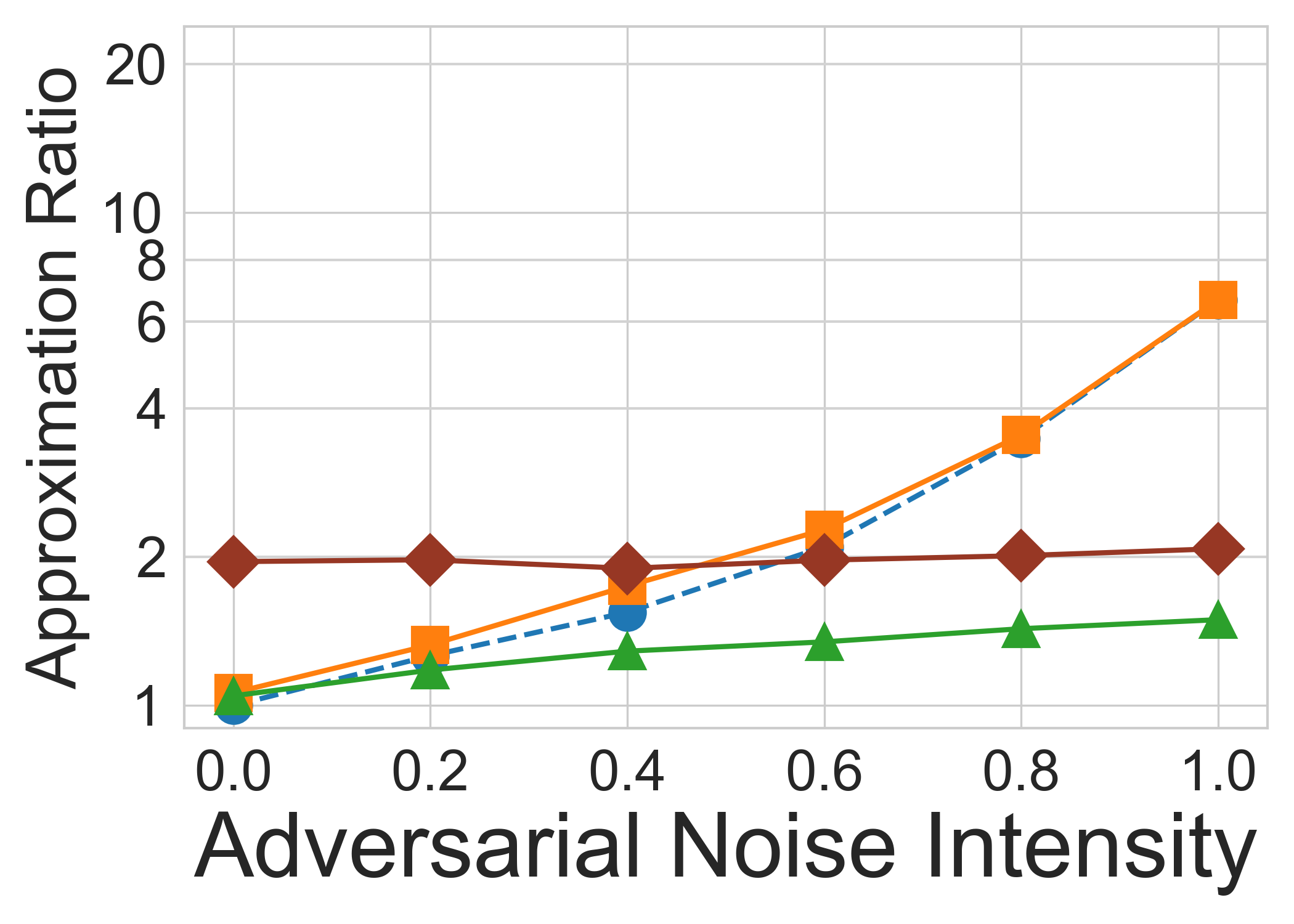}
        \caption{$\alpha = 1.4$}
    \end{subfigure}\hfill
    \begin{subfigure}{0.19\linewidth}
        \centering
        \includegraphics[width=1\linewidth]{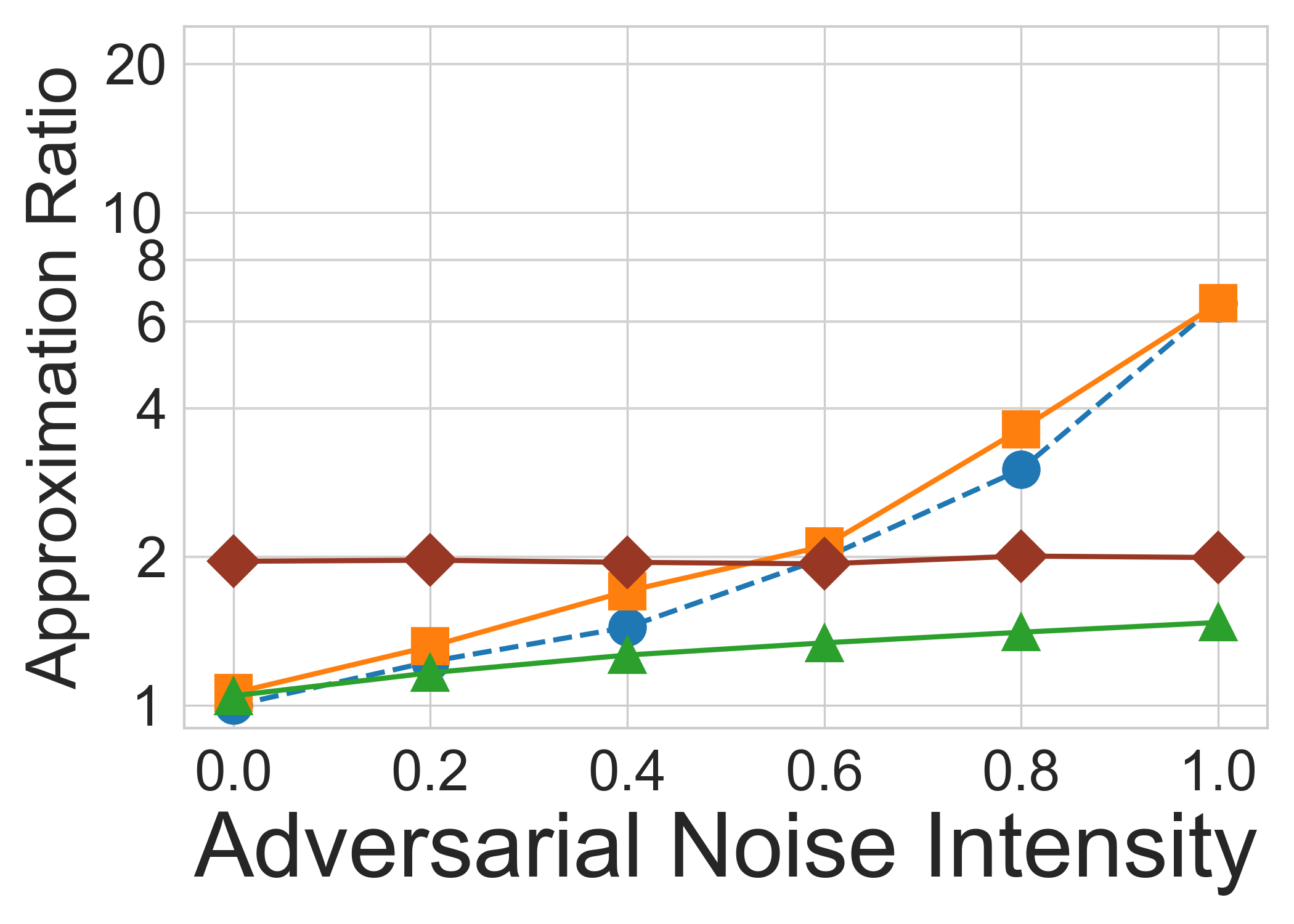} 
        \caption{$\alpha = 1.6$}
    \end{subfigure}\hfill
    \begin{subfigure}{0.19\linewidth}
        \centering
        \includegraphics[width=1\linewidth]{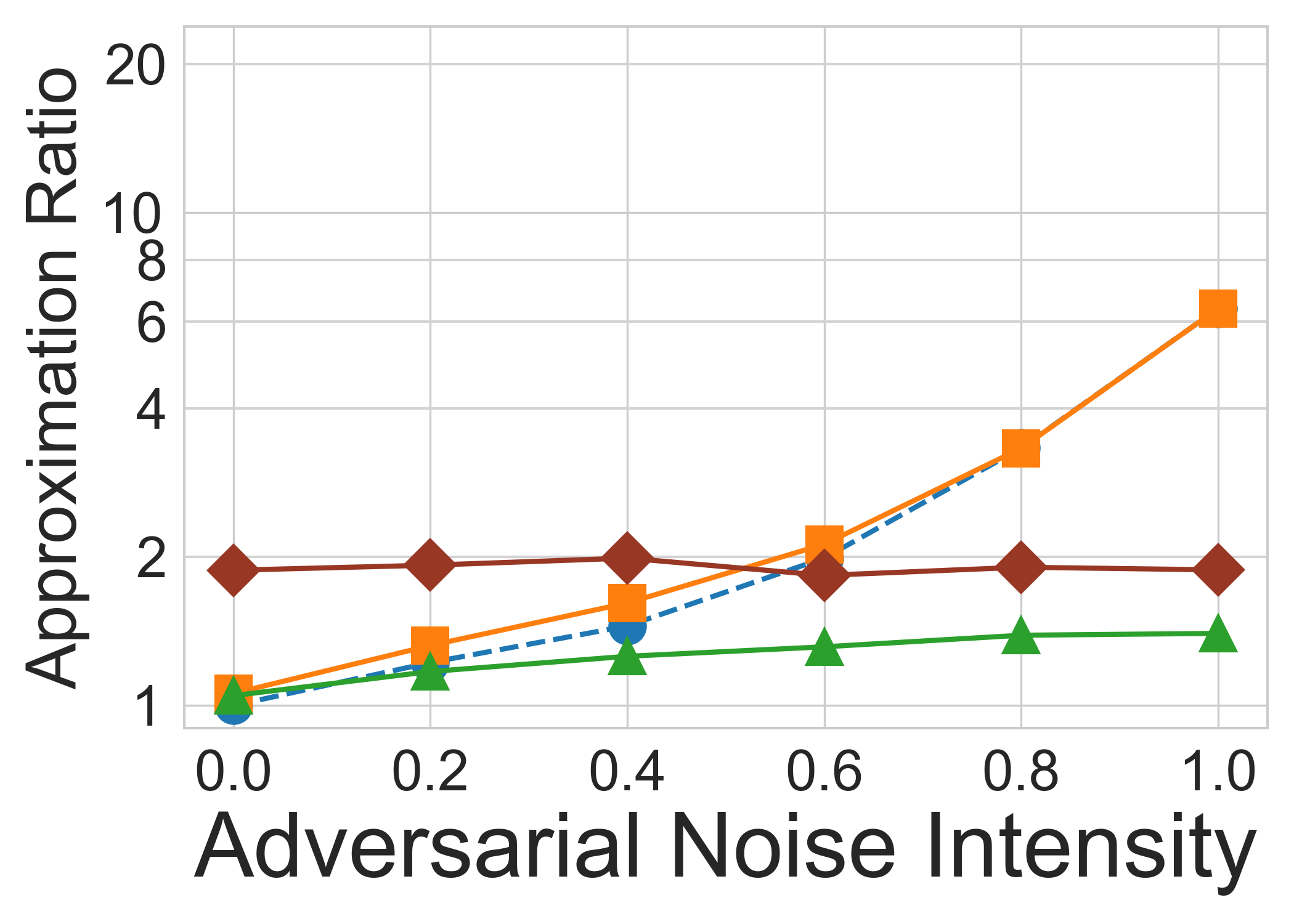}
        \caption{$\alpha = 1.8$}
    \end{subfigure}\hfill
    \begin{subfigure}{0.19\linewidth}
        \centering
        \includegraphics[width=1\linewidth]{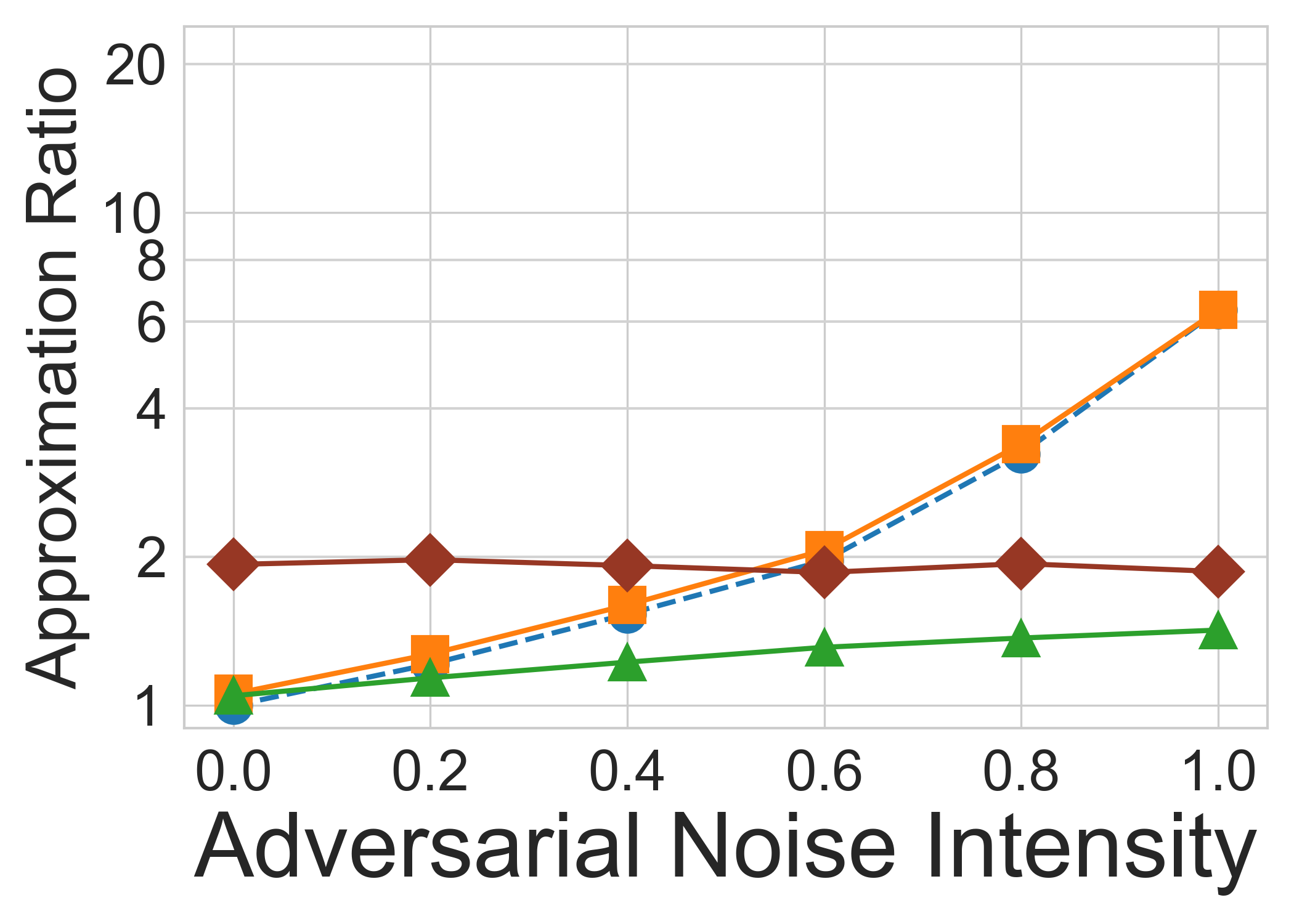}
        \caption{$\alpha = 2.0$}
    \end{subfigure}
    
\caption{\textbf{Smoothness against Adversarial Noise} (Pareto demand, $n=64$): average communication cost, relative to uOPT.}
\label{fig:algs-adversarial-costs}
\vspace{-3mm}
\end{figure*}

\begin{figure*}[t]
    \centering
\begin{subfigure}{1\linewidth}
  \hspace{9.5cm}
\includegraphics[width=0.33\linewidth]{plots/legend.png}
\vspace{-2mm} 
\end{subfigure}
\begin{subfigure}{0.35\linewidth}
\centering
\includegraphics[width=\linewidth]{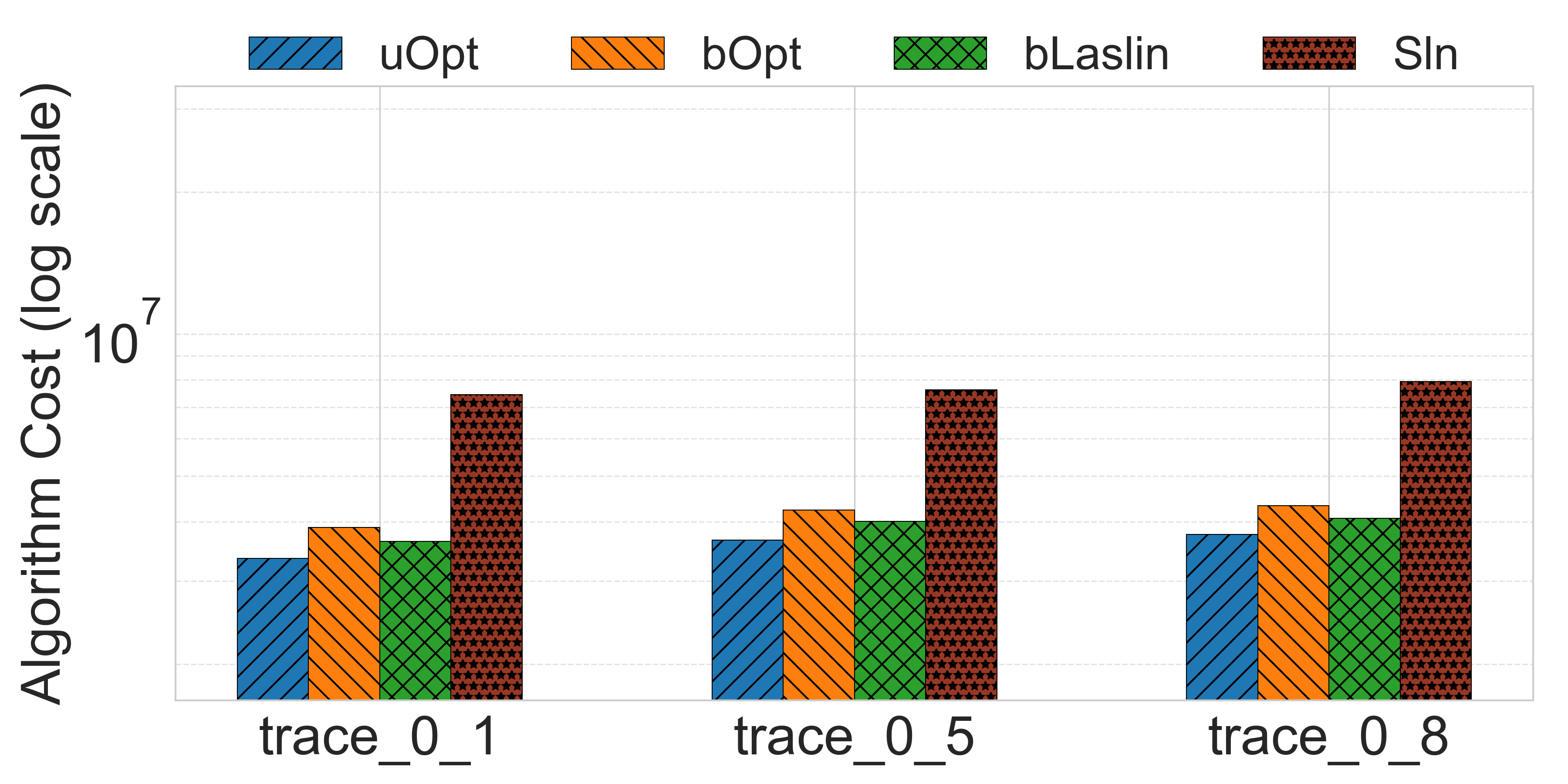}
\caption{pFabric traces ($\eta=0$)}
\label{fig:result-pfabric-consistency}
\end{subfigure}\hfill
\begin{subfigure}{0.21\linewidth}
\centering
\includegraphics[width=1\linewidth]{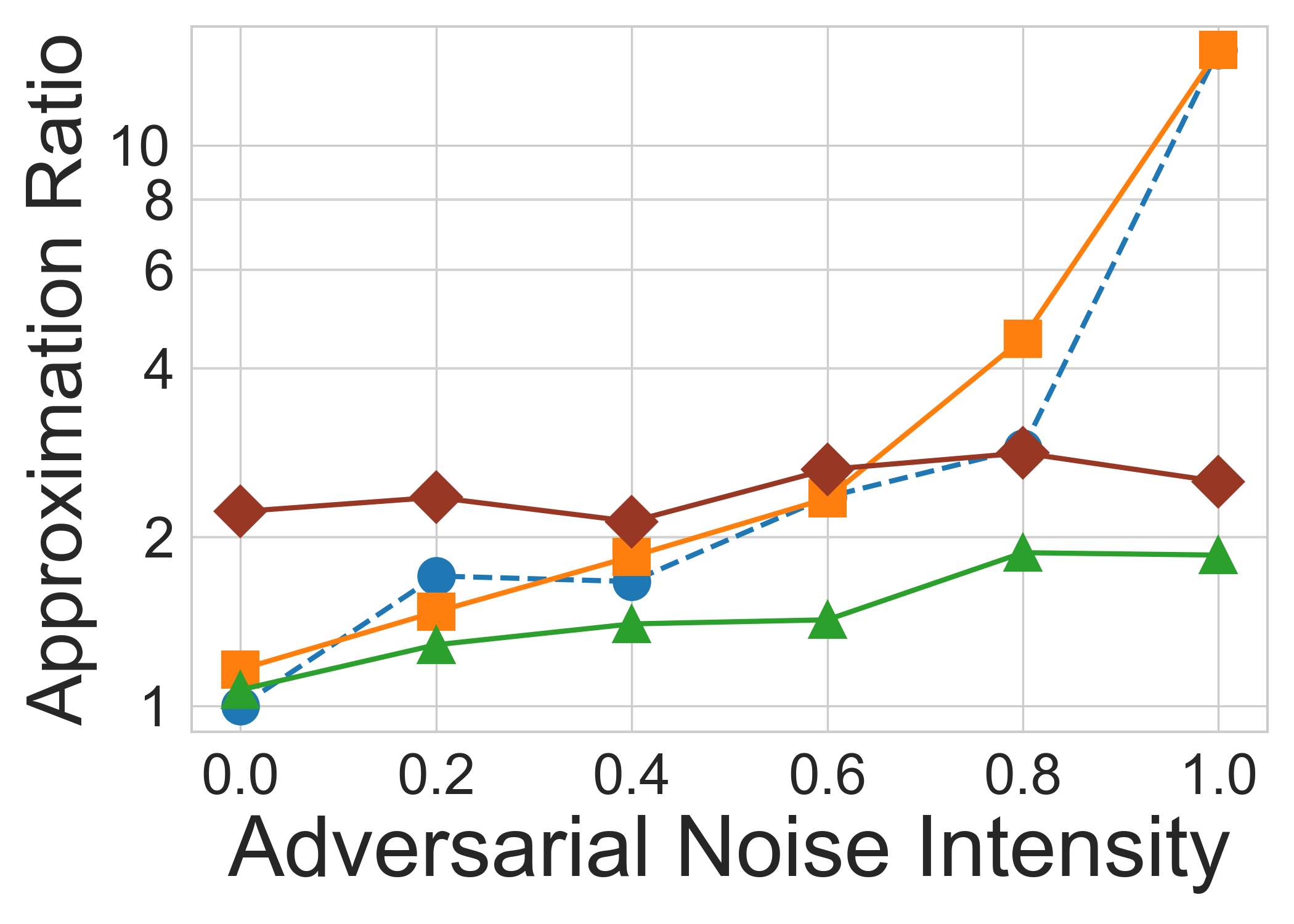}
\caption{pFabric trace$\_0\_1$}
\end{subfigure}\hfill
\begin{subfigure}{0.21\linewidth}
\centering
\includegraphics[width=1\linewidth]{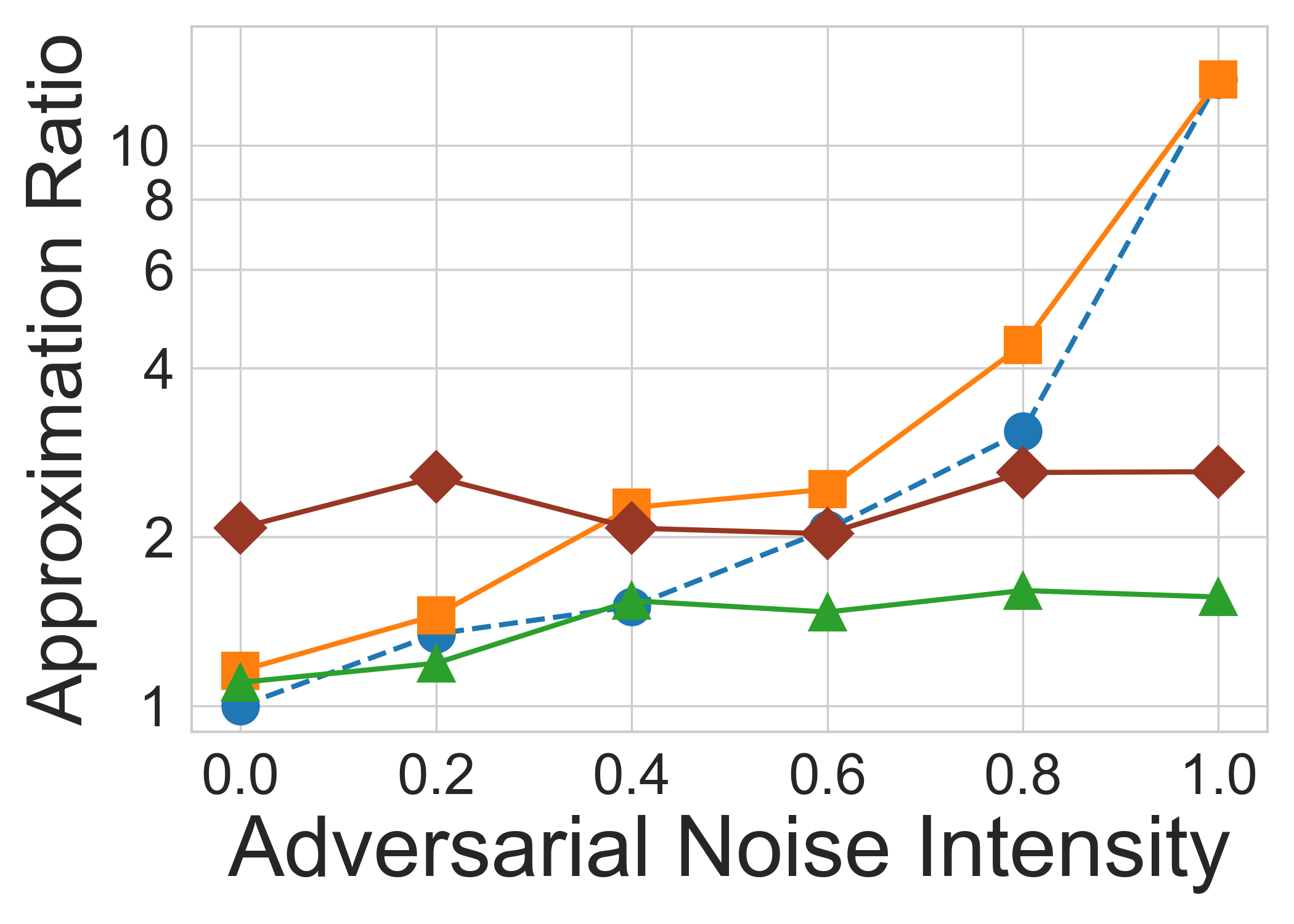}
\caption{pFabric trace$\_0\_5$}
\end{subfigure}\hfill
\begin{subfigure}{0.21\linewidth}
\centering
\includegraphics[width=1\linewidth]{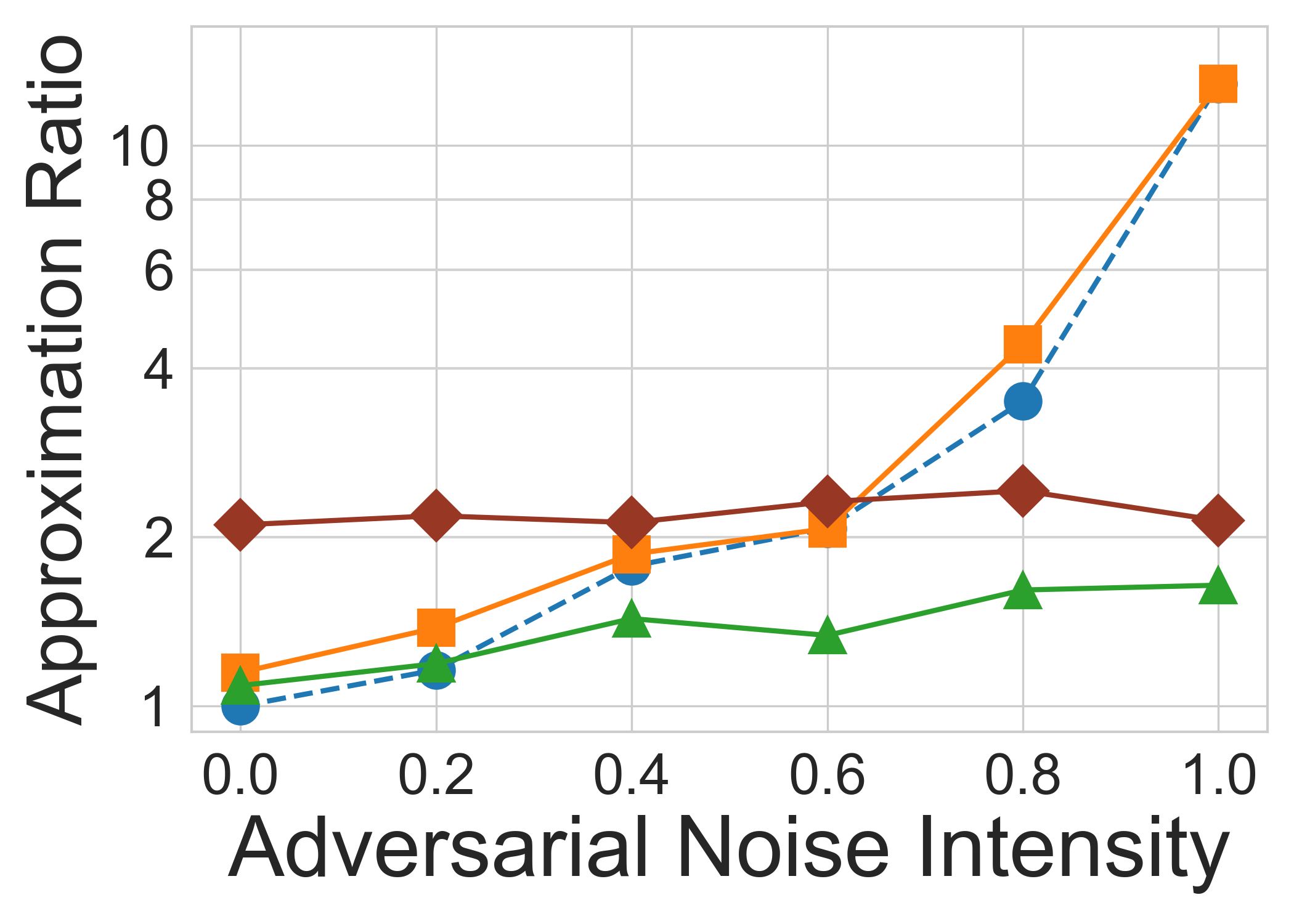}
\caption{pFabric trace$\_0\_8$}
\end{subfigure}  
\caption{ \textbf{pFabric demand under Adversarial Noise} ($n=144$): consistency (a) and smoothness (b),(c),(d).}
\label{fig:algs-pfab-adversarial-costs}
    \vspace{-3mm}
\end{figure*}

\begin{figure}[t]
\begin{subfigure}{1\linewidth}
\centering
\includegraphics[width=0.65\linewidth]{plots/legend.png}
\end{subfigure}
\begin{subfigure}{0.495\linewidth}
\centering
\includegraphics[width=\linewidth]{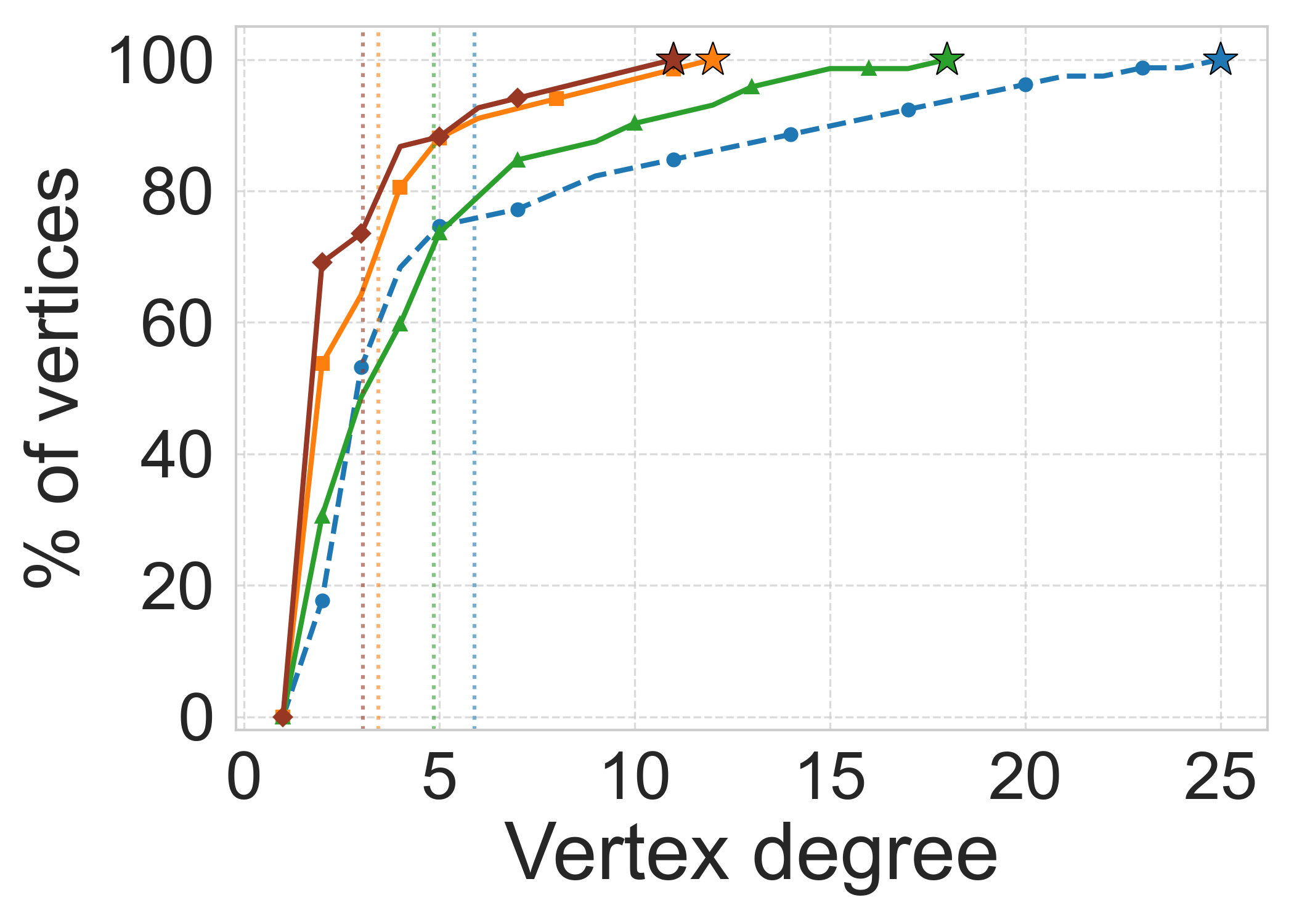}
\caption{Pareto $\alpha=0.2, n=64$}
\label{fig:result-pareto-vertex-cdf}
\end{subfigure}\hfill
\begin{subfigure}{0.495\linewidth}
\centering
\includegraphics[width=1\linewidth]{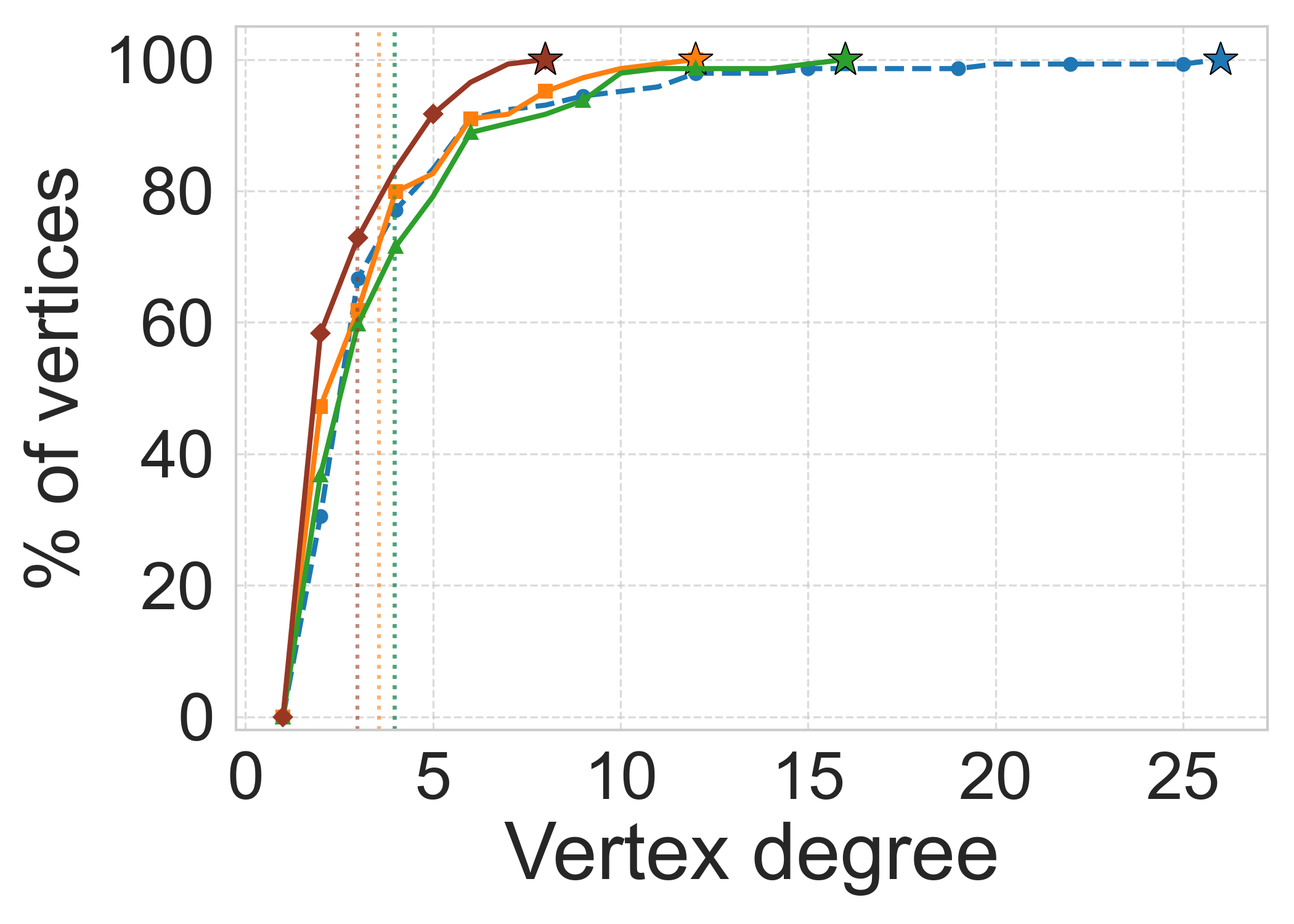}
\caption{pFabric trace$\_0\_5, n=144$}
\label{fig:result-pfabric-vertex-cdf}
\end{subfigure}
\caption{\textbf{Node degree CDF} ($\eta=0$): the vertical lines indicate the average node degree. 
} 
\label{fig:result-vertex-cdf}
\vspace{-3mm}
\end{figure}

\section{Experimental Evaluation}
\label{sec:experiments}

In this section, we present our experimental setup and results.


\paragraph*{Algorithms and baselines.} To evaluate the performance of \laslin, we implemented the following algorithms:
\begin{itemize}
\item \textbf{Sln:} Demand-Oblivious Discrete \sln (\autoref{def:sln}, with the height of each node having been assigned the number of consecutive independent trials in which a random variable $x \in [0, 1]$ satisfied $x < 0.5$); 
\item \textbf{uOPT:} Unbounded Optimum Static \sln  (\autoref{def:ossln}) on predicted demand;
\item \textbf{bOPT:} Bounded Optimum Static \sln (\autoref{def:ossln-bounded} on predicted demand, with $h_{\max} = 6$); 
\item \textbf{bLasin}: \laslin (\autoref{def:learning-augmented}, with bOPT as \textit{input prediction});
\end{itemize}

\paragraph*{Demand matrices.}
In our experiments, we used both synthetic and real-world workloads. The synthetic workloads were based on a Pareto distribution, with the shape parameter $\alpha \in [0.2,2.0]$ (the higher the value of $\alpha$, the closer the demand to a uniform distribution). For each value of $\alpha$, we generated $30$ distinct demand matrices on a network with $64$ nodes. To validate our algorithms on real-world data, we utilized three pFabric traces \cite{pfabric}, generated by executing simulation scripts in NS2 on a network with $144$ nodes. 

\paragraph*{Noise generation.}
In line with some previous related work ~\cite{LykourisV18}, we evaluate the smoothness of our algorithms by introducing noise, or \textit{prediction error}, either to the demand weight values (between pairs of nodes) or to the heights assigned to nodes by an \sln algorithm.
In particular, we implemented two noise models:
\begin{itemize}
    \item \textbf{Stale Noise:} This model aims to simulate the effect of an \textit{outdated input prediction} through a probabilistic swapping mechanism. For every node pair in the demand matrix, the  \textit{demand weight value} is replaced by a different value, taken from a historical or reference workload, with probability $\eta$ (the noise intensity parameter).
    \item \textbf{Adversarial Noise:} This model simulates the effect of the presence of \textit{malicious nodes} in the network. For every network node, the \textit{height value} assigned to it by the Skip List algorithm is reset to $1$, with probability $\eta$.
\end{itemize}

The Stale Noise model represents scenarios with delayed updates of input prediction values, caused by, e.g., network reconfigurations or synchronization errors. We restricted the evaluation of smoothness against stale noise to demand matrices based on Pareto distributions.\footnote{We excluded pFabric traces from this part of the evaluation, as the dataset lacked sufficient independent traces to generate reliable reference models.} Given a workload $W_i, 1\leq i< 30$, we designated the subsequent workload, $W_{i+1}$, as the reference source for the current workload, $W_i$. Specifically, for each communication pair $(x, y)$, the actual demand weight $W_i(x, y)$ was replaced by the reference weight $W_{i+1}(x, y)$, with probability $\eta$.

In the Adversarial Noise model, adversarial agents reset their own node heights to a baseline value of $1$, with probability $\eta$, eliminating structural (\sln) shortcuts and increasing the total routing work (communication cost). The objective of this experiment is to demonstrate the efficacy of a simple countermeasure: by increasing each node's height value with a small randomized value, we can effectively safeguard the network’s performance against such adversarial intent.


\paragraph*{Metrics.} We used the following performance metrics:
\begin{itemize}
    \item Communication cost (\autoref{def:cost}); 
    \item Consistency (\autoref{def:consistency-robustness}) and smoothness;
    \item Node degree cumulative distribution function (CDF): percentage of nodes with degree $\leq x, x\in[1,n)$, and the average node degree.
\end{itemize}

\paragraph*{Results.} 
In Figures \ref{fig:result-pareto-consistency} and \ref{fig:result-pfabric-consistency}, we evaluate the consistency of \laslin under Pareto-distributed demand with variable shape parameter ($\alpha\in[0.2,2.0]$), assuming perfect prediction advice ($\eta=0$). We compare the (average) total communication cost of \laslin (bLaslin) against three baselines: the bounded static optimum (bOPT), the unbounded static optimum (uOPT), and the demand-oblivious discrete \sln (Sln). The results demonstrate that, unlike the demand-oblivious baseline, \laslin achieves a total communication cost close to the optimum across all simulated workloads, regardless of spatial locality ($\alpha$). This empirically confirms that \laslin is consistent.

Notably, \laslin, despite being projected onto the bOPT structure (bLaslin), i.e., using the solution of bOPT as its \textit{input prediction}, actually achieves better communication cost than bOPT, even in noiseless scenarios (Fig. \ref{fig:result-pareto-consistency}). This performance gain is enabled by the slight increase in the network's average and maximum degrees, as shown in Figures \ref{fig:result-pareto-vertex-cdf} and \ref{fig:result-pfabric-vertex-cdf}. This behavior has been expected, since bLaslin effectively breaks height ties, inherent in the bOPT solution. By adding a small random remainder to the predicted heights, nodes that have previously shared the same rank (height) become differentiated, generating additional (\sln) shortcut links that further reduce routing cost.

In Figure \ref{fig:algs-stale-costs}, we evaluate the smoothness of \laslin against increasing levels of stale noise. Both uOPT and bOPT are significantly impacted by the intensity of the stale noise (increasing $\eta$ on the x-axis), especially for lower $\alpha$ values, where the workload is more skewed. In contrast, bLaslin consistently achieves lower communication cost than its input prediction (bOPT). This demonstrates that \laslin remains robust relative to both the type and magnitude of input prediction errors.

Additionally, we observe that, as $\alpha$ increases, approaching a uniform demand distribution, the performance of the optimum \sln converges to that of the discrete demand-oblivious \sln.

The strongest advantage of \laslin is observed in the presence of adversarial noise (Figures \ref{fig:algs-adversarial-costs} and \ref{fig:algs-pfab-adversarial-costs}(b,c,d), note that the y-axis of these plots is shown in the log scale). 
By simply projecting the \laslin mechanism onto the predicted node heights, we successfully mitigate the disruption that could be caused by adversarial agents. We observe that while both uOPT and bOPT degrade progressively as noise intensity increases, eventually underperforming even the demand-oblivious discrete \sln baseline, bLaslin consistently achieves lower communication costs than both its reference prediction (bOPT) and the lower bound baseline (uOPT).

Finally, it is important to note that the average and the maximum node degree values obtained in the simulations (Fig. \ref{fig:result-vertex-cdf}) are in alignment with the theoretical expectation (\autoref{thm:expected-degree}, \autoref{thm:degree-whp}
). In particular, for the discrete \sln, the maximum degree is approximately $2 \cdot\log n$ (around $12$ for $n=64$), which, in the bounded case, was exactly equal to the limit of $2\cdot h_{\max} = 12$. Notably, the average degree remained tightly consistent across all algorithms. This confirms that the degree distribution of an \sln is heavily skewed, with the vast majority of nodes maintaining a low degree.

\section{Related Work}

\noindent \textbf{Improved designs of overlay networks.} Overlay networks have been the focus of computer scientists for a long while (see~\cite{LuaCPSL05} 
for a detailed survey). 
Many overlay networks of constant epected degree and logarithmic paths lengths are known, for example~\cite{continuous-discrete03} that extends the work of~\cite{FraigniaudG06}, giving a general overlay construction such that for a network with degree $d$ and $n$ nodes, it is possible to guarantee path length of $\Theta(\log_d n)$.
Lately, improving the design of networks given prior information about communication demand has attracted the attention of researchers. In particular, the work of~\cite{avin2018toward} set the scene for demand-aware network design, which was then explored more with particular focus on datacenter networking~\cite{SouzaGS22,SeedTree,SpiderDAN}.  

Works closest to ours are by~\cite{SkipListNetwork,CirianiFLM07} in which the authors discuss a self-adjusting variant of skip list network: one that adjusts over time based on the incoming demand, which is in contrast with our work that focuses on an optimal skip list network with static height given the communication demand as the input. This is particularly important, as self-adjustments can introduce additional cost (e.g., in terms of delay) to the system.
Also, the work of~\cite{MandalCK15} discussed how a skip-list construction can be used for peer-to-peer communications, however they do not provide theoretical upper bounds for the cost of their overlay network.
Lastly, we point out to other extensions of skip list, namely Skip Graph~\cite{AspnesS03} and SkipNet~\cite{HarveyJSTW03} that, to facilitate overlay communications, consider expected degree to be $O(\log n)$, in contrast to our design, which has a constant degree in expectation.

\noindent \textbf{Learning-augmented algorithms.} 
Given the rise of machine learning algorithms (which inherently comes with some errors), and after the pioneering work of~\cite{PurohitSK18} on ski rental \& job scheduling, the focus has been shifted on studying consistency and robustness of algorithms augmented with the learned advice.
Such an approach has been used in various applications so far\footnote[2]{See a full list of recent works in the following website: \url{algorithms-with-predictions.github.io}}, for example caching~\cite{LykourisV18},
online TSP~\cite{GouleakisLS23},
list update~\cite{AzarLS25}.

Some of the more recent works focus on improving single-source search data structures. Initially, the works of~\cite{ST1,ST2} provided learning-augmented search trees. The works that are closest to our work are by~\cite{MLSkipList1,MLSkipList2}, which studied learning-augmented single-source skip lists. In contrast, we consider a more challenging model, where the focus is on pairwise communications rather than communication from a single source.
Recently, we became aware of a not peer-reviewed manuscript~\cite{P2PWithAdvice} that aims to augment overlay networks with learned advice; however, this paper looks into improving self-stabilization and recovery of overlay networks, not improving communication delay.


\bibliographystyle{named}
\bibliography{reference}

\section{Appendix}

\begin{algorithm}[h]

\caption{find\_neighbors($v$, $h_v$, current\_height)}\label{alg:find_neighbors}
\tcp{code running at node $u$}
\If{current\_height = $- \infty$}{
  $\text{anchor}_u \gets v$
}

$N \gets \emptyset$\\
\If{current\_height $< h_u$}{
  $N \gets \{u\}$\\
  $N_u \gets N_u \cup \{v\}$\\
  Update $D_u$
}
\If{$u < v$}{
  $N \gets \{x \in N_u | x \le v \text{ or } (h_v < D_u[x] \text{ and } h_v < h_u)\}$\\
  \For{$x \in N_u$}{
    \If{$x < u \text{ and } h_u \le D_u[x] \le h_v$}{
      $N \gets N \cup x.\text{find\_neighbors}(v, h_v, h_u)$
    }
  }
}

\Else{
  $N \gets \{x \in N_u | v \le x \text{ or } (h_v < D_u[x] \text{ and } h_v < h_u)\}$\\
  \For{$x \in N_u$}{
    \If{$u < x \text{ and } h_u \le D_u[x] \le h_v$}{
      $N \gets N \cup x.\text{find\_neighbors}(v, h_v, h_u)$
    }
  }
}
\Return $N$

\end{algorithm}

\begin{algorithm}[h]

\caption{delete\_node($v$, $h_v$, current\_height)}\label{alg:delete_neighbors}
\tcp{code running at node $u$}
$N \gets \emptyset$\\
\If{current\_height $< h_u$}{
  $N \gets \{u\}$\\
  $N_u \gets N_u \setminus \{v\}$\\
  Update $D_u$
}
\If{$u < v$}{
  $N \gets \{x \in N_u | x \le v \text{ or } (h_v < D_u[x] \text{ and } h_v < h_u)\}$\\
  \For{$x \in N_u$}{
    \If{$x < u \text{ and } h_u \le D_u[x] \le h_v$}{
      $N \gets N \cup x.\text{delete\_node}(v, h_v, h_u)$
    }
  }
}

\Else{
  $N \gets \{x \in N_u | v \le x \text{ or } (h_v < D_u[x] \text{ and } h_v < h_u)\}$\\
  \For{$x \in N_u$}{
    \If{$u < x \text{ and } h_u \le D_u[x] \le h_v$}{
      $N \gets N \cup x.\text{delete\_node}(v, h_v, h_u)$
    }
  }
}
\Return $N$

\end{algorithm}

\end{document}